\DeclarePairedDelimiter\floor{\lfloor}{\rfloor}
\newtheorem{theorem}{Theorem}
\newtheorem{corollary}{Corollary}
\newtheorem{example}{Example}
\newtheorem{strategy}{Strategy}
\newtheorem{lemma}[theorem]{Lemma}
\newtheorem{definition}{Definition}
\begin{document}
\title{Negotiation Strategies for Agents with Ordinal Preferences}
\author {Sefi Erlich \\ \texttt{erlichsefi@gmail.com} \\
	Department of Computer Science, Ariel University, Israel
	\and Noam Hazon \\ \texttt{noamh@ariel.ac.il} \\
	Department of Computer Science, Ariel University, Israel
	\and Sarit Kraus \\ \texttt{sarit@cs.biu.ac.il} \\
	Department of Computer Science, Bar-Ilan University, Israel}

\maketitle
	\begin{abstract}
	Negotiation is a very common interaction between automated agents. Many common negotiation protocols work with cardinal utilities, even though ordinal preferences, which only rank the outcomes, are easier to elicit from humans. In this work we concentrate on negotiation with ordinal preferences over a finite set of outcomes. We study an intuitive protocol for bilateral negotiation, where the two parties make offers alternately. 
	We analyze the negotiation protocol under different settings. First, we assume that each party has full information about the other party's preference order. 
	We provide elegant strategies that specify a sub-game perfect equilibrium for the agents. We further show how the studied negotiation protocol almost completely implements a known bargaining rule. Finally, we analyze the no information setting. We study several solution concepts that are distribution-free, and analyze both the case where neither party knows the preference order of the other party, and the case where only one party is uninformed.
	\end{abstract}
	\section{Introduction}
	%
	Negotiation is a dialogue between two or more parties over one or more issues, where each party has some preferences regarding the discussed issues, and the negotiation process aims to reach an agreement that would be beneficial to the parties. 
	The basic automated negotiation protocol, which consists of two parties that alternate offers, was introduced by Rubinstein~\cite{rubinstein1982perfect}. Since then a lot of work has been done to develop other types of protocols, and to extend the basic bilateral negotiation protocol~\cite{fatima2014principles}. 
	
	Many common negotiation protocols work with cardinal utilities, i.e., with utility functions that give different outcomes a specific numerical value, according to the agents' preferences. This representation requires the agents to specify the magnitude of how much they prefer one outcome over another. However, this specification is not always readily available. Moreover, in many cases the agents need to act on behalf of humans, and the use of cardinal utilities for representing human preferences has been widely criticized on the grounds of cognitive complexity, difficulty of elicitation, and other factors (e.g.,~\cite{ali2012ordinal}). On the other hand, ordinal preferences only rank the outcomes, so they reduce cognitive burden and are easier to elicit.
	
	Indeed, there are some negotiation protocols that work with ordinal preferences. However, these protocols only start with the ordinal preferences, and they then convert them to a cardinal utility according to some assumptions~\cite{nash1950bargaining,shapley1969utility}. Moreover, the traditional assumption in the negotiation theory is that there is a continuum of feasible outcomes. However, many real-life situations involve a finite number of outcomes, such as two managers choosing from among a few job candidates, or a couple choosing from among a few apartments. Even in negotiation over monetary payoffs, the number of outcomes is bounded by the indivisibility of the smallest monetary unit.
	
	In this paper we study negotiation with ordinal preferences over a finite set of outcomes, without converting the ordinal preferences to a cardinal utility
	\footnote{This is also the assumption in most of the voting literature~\cite{handbook}, but see~\cite{boutilier2015optimal} that analyzes social choice functions after extracting cardinal utilities from ordinal preferences.}. 
	We analyze an intuitive protocol for bilateral negotiation that was introduced by Anbarci~\cite{anbarci93noncooperative}, where the two parties make alternating offers. Each offer is a possible outcome, and we allow the parties to make any offer they would like, in any order. The only restriction is that no offer can be made twice, thus if there are $m$ possible outcomes the negotiation will last at most $m$ rounds. 
	
	We analyze the negotiation protocol in different settings. First, we assume that each party has full information about the other party's preference order, and she will thus take this into account and act strategically when deciding what to offer. 
	We provide elegant strategies that specify a Sub-game Perfect Equilibrium (SPE) for the parties. Specifically, our strategies are easy to implement, and we improve the previous result of~\cite{anbarci2006finite} and find a SPE strategy 
	in linear time instead of quadratic time. 
	
	We note that there are several works that studied bargaining rules with ordinal preferences over a finite set of outcomes, but they are inherently different from non-cooperative negotiation protocols. A bargaining rule is a function that assigns to each negotiation instance a subset of the outcomes, which are considered the result of the negotiation. These rules are useful only in a cooperative environment, or where there is a central authority that can force the parties to offer specific outcomes in a specific order. 
	
	The proof that our proposed strategies specify a SPE provides us with a deep understanding of the negotiation protocol, which enables us to establish a connection to the results of the designed \textit{Rational Compromise} ($RC$) bargaining rule~\cite{Bargainingoveraf}. Surprisingly, the SPE result of the negotiation protocol is always part of the set of results returned by the $RC$ rule, even though the protocol does not force the parties to offer specific outcomes in a specific order as the $RC$ rule does. This connection also enables us to prove that the SPE result of the protocol is monotonic.
	
	We then move on to analyze the \textit{no information} setting. We analyze the case where neither party knows the preference order of the other party nor do they know prior probability distribution over possible orders. In this setting we first show that an ex-post SPE does not exist. We then provide the maxmin strategy and the maxmin value of the game that is imposed by our protocol. We also show that in our setting, surprisingly, any pair of maxmin strategies also specifies a robust-optimization equilibrium.
	Finally, we consider the case where one party has full information 
 while the other party has no information, and show how the informed party is able to use her information so that the negotiation result will be better for her.
	
	The contribution of this work is threefold. First, we introduce elegant strategies that specify a SPE, and provide a substantial analysis for showing that they indeed form a SPE. We also provide an improved algorithm for computing 
	a SPE strategy for the studied negotiation protocol. The second contribution of our work is that we show how the studied negotiation protocol almost completely implements the $RC$ rule. As noted by K{\i}br{\i}s and Sertel~\cite{Bargainingoveraf}, who studied the $RC$ rule, the descriptive relevance of the $RC$ rule for real-life bargaining depends on the existence of non-cooperative games that implement it, and to the best of our knowledge our paper is the first to find such a connection. 
	Finally, we provide an analysis of the negotiation protocol under a no information setting, which has not been considered before. 
	\section{Related Work}
	Negotiation protocols and the strategic interaction they imply have been extensively studied. We refer to the books of \cite{osborne90a}, \cite{kraus2001strategic}, and \cite{fatima2014principles} for extensive coverage of the different approaches. The traditional assumption in negotiation theory is that there is a continuum of feasible outcomes, but many real-life negotiation scenarios violate these assumptions. Indeed, there are several works that consider problems with a finite number of outcomes. For example, see \cite{zlotkin1996mechanism}, \cite{mariotti1998nash}, \cite{nagahisa2002axiomatization} and recently \cite{nunez2015bargaining}. All of these works focus on negotiation when the preferences are represented by a cardinal utility, while we study negotiation with ordinal preferences.
	
	Many other works study negotiation with ordinal preferences over a finite set of outcomes (for example, \cite{zhang2008ordinal}). Sequential procedures, in particular the \textit{fallback bargaining} method, have attracted considerable interest \cite{sprumont1993intermediate,hurwicz1999designing,brams2001fallback,Bargainingoveraf,conley2012ordinal}, since they satisfy some nice theoretical properties. All of these works study bargaining rules that are useful in a cooperative environment. We study a negotiation protocol that is useful in a non-cooperative environment, and show that its SPE almost completely implements the individually rational variant of the \textit{fallback bargaining} method, i.e., the $RC$ rule~\cite{Bargainingoveraf}. We note that the $RC$ bargaining rule is equivalent to \textit{Bucklin} voting with two voters, and thus our result can also be interpreted as a (weak) SPE implementation of the Bucklin rule where there are two voters.
	
	There are few works that study negotiation protocols with ordinal preferences over a finite set of outcomes. De Clippel et al.~\cite{de2014selection} study the problem of selection of arbitrators, and they concentrate on two-step protocols. The most closely related works are Anbarci's papers. In~\cite{anbarci93noncooperative} he introduces the \textit{Voting by Alternating Offers and Vetoes} (VAOV) negotiation protocol, which we study here, and shows the possible SPE results in different scenarios. Implicitly, this work shows that the SPE result is unique and Pareto optimal. In~\cite{anbarci2006finite} he introduces three additional negotiation protocols. Moreover, he sharpens his previous result by exactly identifying the SPE result of the VAOV protocol, and by providing an algorithm that computes a SPE strategy. He also shows that if the outcomes are distributed uniformly over the comprehensive utility possibility set and as the number of outcomes tends to infinity the VAOV protocol converges to 
	the equal area rule \cite{thomson1994cooperative}. We provide a more efficient algorithm that finds an elegant SPE strategy. In addition, we were able to establish the relationship between the VAOV protocol and the $RC$ rule, which works with a \textit{finite} number of outcomes, and we also analyze the no information setting. 
\section{The Negotiation Protocol} 
	We assume that there are two negotiation parties, $p^1$ and $p^2$, negotiating over a set of potential outcomes $O=\{o_{1},...,o_{m}\}$, where $p^1$ is the party that makes the first offer. Each party has a preference order over the potential outcomes that does not permit any ties. Formally, the preferences of a party $p$ are a strict order, $\succ_{p}$, which is a complete, transitive and irreflexive binary relation on $O$. We write $o' \succ_{p} o$ to denote that party $p$ strictly prefers $o'$ to $o$, and $o' \succeq_{p} o$ to denote that $o' \succ_{p} o$ or $o'=o$ (i.e $o'$ is the exact same outcome as $o$).
	Clearly, each party would like to maximize her utility, i.e., that the result of the negotiation will be the outcome that is placed as high as possible according to her preferences. 
	
	We analyze the following negotiation protocol, which is the VAOV protocol of~\cite{anbarci93noncooperative}. The parties make offers alternately. No offer can be made twice, but an agreement must be reached since we assume that any agreement is preferred by both parties over a no-agreement result. We also assume that lotteries are not valid offers, as in most real-life negotiations. Formally, denote by $O_t$ the set of available outcomes at round $t$, and the let $O_1=O$. At round $1$, party $p^1$ offers an outcome $o \in O_1$ to $p^2$. If $p^2$ accepts, the negotiation terminates successfully with $o$ as the result of the negotiation. Otherwise, party $p^2$ offers an outcome $o' \in O_2 = O_1 \setminus \{o\}$. If $p^1$ accepts, the negotiation terminates successfully with $o'$ as the result of the negotiation. Otherwise, $p^1$ offers an outcome $o'' \in O_3 = O_2 \setminus \{o'\}$ to $p^2$, and so on. If no offer was accepted until round $m$ then the last available outcome is accepted in the last round as the result of the negotiation. We denote by $p^i$ the party whose turn it is to make an offer at a given round, and by $p^j$ the other party. That is, $p^i = p^1$ in odd round numbers and $p^i = p^2$ otherwise.
	
	We first provide a general result that is useful with any model of information. Consider the following definition:
	\begin{definition}
		In each round $t$, let $L_t^j$ be the $\floor*{{|O_t|}/{2}}$ lowest ranked outcomes in $\succ_{p^j}$. 
		If $|O_t|$ is odd, then let $L_t^i$ be the $\floor*{{|O_t|}/{2}}$ lowest ranked outcomes in $\succ_{p^i}$.
		If $|O_t|$ is even, then $L_t^i$ is the ${|O_t|}/{2}-1$ lowest ranked outcomes in $\succ_{p^i}$. 
		\label{defn:lowers}
	\end{definition}
	We show that in each round $t$ we can identify a set of outcomes that cannot be the negotiation result if the parties are rational, regardless of the information they have. Intuitively, these are all the outcomes that are in the lower parts of the preference orders of both parties, denoted by $Low_t$. We denote all of the other outcomes by $JG_t$. 
	\begin{definition}
		Given a round $t$, let
		$Low_t=\{o : o \in L_t^i \cup L_t^j\}$, and $JG_t = O_t \setminus Low_t$.
	\end{definition}
	\begin{lemma}
		\label{lemma:not_in_lowers}
		Let $o$ be the result of the negotiation if both parties are rational. Then, $o \notin Low_t$.
	\end{lemma}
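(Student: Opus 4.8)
The plan is to construct, for each of the two parties, an information-free ``defensive'' strategy for the subgame that starts at round $t$ which forces the final outcome to avoid that party's lower block, and then to invoke rationality. Fix round $t$ and set $n=|O_t|$. For a party $p\in\{p^i,p^j\}$, write $L_t^p$ for $L_t^i$ when $p=p^i$ and for $L_t^j$ when $p=p^j$, and let $G_p=O_t\setminus L_t^p$ be her ``upper block''. The defensive strategy of $p$ is: whenever $p$ must respond to an offer $o$, she accepts iff $o\in G_p$; whenever it is $p$'s turn to offer, she offers an arbitrary still-available outcome of $G_p$.

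The first step is to show that this strategy is well defined and guarantees a result in $G_p$ against \emph{any} behaviour of the opponent. The bookkeeping fact that makes everything work is the identity that the number of offers $p$ herself makes in the subgame beginning at round $t$ equals $|G_p|-1$; I expect this to be the delicate point, because it is precisely here that the two different sizes prescribed for $L_t^i$ according to the parity of $|O_t|$ are needed, and where the asymmetry between the opener $p^i$ and the responder $p^j$ enters. I would verify it by a short parity case check. Granting the identity, an outcome of $G_p$ can leave the pool only by being offered by $p$ and rejected: if $p$ offers it and it is accepted, or if the opponent offers it, the negotiation ends immediately with that $G_p$-outcome, since $p$ accepts every offer drawn from $G_p$ and never offers anything outside $G_p$. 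Hence at every stage the number of $G_p$-outcomes already removed is at most the number of offers $p$ has made so far, which is at most $|G_p|-1$, so at least one outcome of $G_p$ is always still available. This shows both that ``offer an outcome from $G_p$'' can always be executed and that, when the protocol's forced last round is reached (exactly $n-1$ outcomes having been removed), the unique surviving outcome lies in $G_p$; and if the negotiation terminates earlier through an acceptance, the accepted outcome is in $G_p$ by construction. So the result lies in $G_p$ no matter what the opponent does.

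The second step is the rationality argument. Applying the first step with $p=p^j$, party $p^j$ can force the result into $O_t\setminus L_t^j$, and by the definition of $L_t^j$ every outcome of $O_t\setminus L_t^j$ is ranked by $p^j$ strictly above every outcome of $L_t^j$. Thus, if the result $o$ of rational play lay in $L_t^j$, then $p^j$ would strictly benefit from unilaterally switching, in the subgame from round $t$, to her defensive strategy, contradicting that a rational $p^j$ plays a best response there; hence $o\notin L_t^j$. The same argument with $p=p^i$ gives $o\notin L_t^i$, so $o\notin L_t^i\cup L_t^j=Low_t$. The only subtlety in this last step is that no equilibrium notion has been fixed yet, but all we use is that a rational party does, in the subgame starting at round $t$, at least as well as under any strategy she could deviate to there — which is the weakest reasonable reading of rationality and is exactly what the validity of the defensive strategy against all opponent strategies lets us exploit.
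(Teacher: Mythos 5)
Your proof is correct and takes essentially the same route as the paper's: a counting/security argument showing that each party can unilaterally guarantee a result outside her own lower block (your ``defensive'' strategy is exactly the maxmin strategy the paper introduces later, and your parity bookkeeping is the paper's count of how many offers each party can reject from round $t$ on), after which rationality excludes $Low_t$. The paper's version is simply more compressed, counting available rejections directly instead of spelling out the offer/removal invariant.
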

	\begin{proof}
		Starting from round $t$ where $|O_t|=m_t$, each party will be able to reject all of the offers that she would receive from the other party, except for the offer she would receive in the last round. Specifically, if $m_t$ is odd, $p^{i}$ and $p^{j}$ can reject at most $\floor*{{m_t}/{2}}$ offers. If $m_t$ is even, $p^i$ can reject at most $\floor*{{m_t}/{2}}-1$ offers (since it is $p^{i}$'s turn to offer) and $p^{j}$ can reject at most $\floor*{{m_t}/{2}}$ offers. 
		That is, each party $p^k$, $k\in\{1,2\}$, can reject at most $|L^k_t|$ offers. Therefore, each party will always be able to guarantee that the result of the negotiation will be an outcome that is placed higher than the $|L^k_t|$ lowest outcomes in her preference order. Therefore, $o \notin Low_t$. 
	\end{proof}
	We now analyze the negotiation protocol under two different models of information: full information and no information. In each case we are interested in finding the best actions that a party should take, given the information that she has. 
\section{Full Information}
In this setting we assume that each party has full information about the other party's preference order, and she will thus take this information into account when calculating her best strategy. 
Therefore, in the full information setting we are interested in finding a SPE.
Since \cite{anbarci93noncooperative} showed that the SPE result is unique, it suffices to find one set of strategies that specify a SPE. 

Recall that SPE is a mapping that maps the histories of players' choices. Note that in our case if an offer was accepted the game is over. Therefore, a history for $p^i$, the party whose turn it is to make an offer at a given round, consists of a sequence of outcomes that were proposed and rejected in the previous rounds. Let $H^i_t= (o^1,o^2,...,o^{t-1})$ be the history for $p^i$ at round $t$, and note that $O_t = O \setminus H^i_t$. Let $o^-_t$ be the least preferred outcome in $O_t$ according to $\succ_{p^j}$. We show that the following strategy specifies the offering strategy in our SPE. 
\begin{strategy}[\textsc{Offering Strategy}]
\label{str:offer}
	Given a history $H^i_t$, 
	\textbf{if} $I_t=L^i_t \cap L^j_t \neq \emptyset$ \textbf{then} offer $o \in I_t$, \textbf{else} offer $o^-_t$.
\end{strategy}

A history for $p^j$, the party whose turn it is to decide whether to accept or reject an offer at a given round, consists of a sequence of outcomes that were proposed and rejected in the previous rounds and an additional outcome $o$ that was offered by $p^i$ in the current round. Let $H^j_t = H^i_t + o =  (o^1,o^2,...,o^{t-1},o)$ be the history for $p^j$ at round $t$. In addition, given a round $t$ and history for $p^i$, $H^i_t$, let $o^i_t$ be an outcome $o \in I_t = L^i_t \cap L^j_t$ if $I_t \neq \emptyset$, and $o^-_t$ otherwise. Given a round $t$ and history for $p^j$, $H^j_t$, let $o_a(H^j_t)$ be the single outcome in $O_m = O \setminus H^i_m$, where $H^i_m = H^j_t + o^i_{t+1} + ... + o^i_{m-1}$. That is, $o_a(H^j_t)$ is the result of the negotiation if both parties reject all of the offers that they get (except for the last offer) from round $t$ and on, but use the offering strategy that is specified by Strategy~\ref{str:offer} from round $t+1$ and on. We show that the following strategy specifies the response strategy in our SPE.
\begin{strategy}[\textsc{Response Strategy}]
	\label{str:resp}
	Given a history $H^j_t$,
	\textbf{if} $o \succeq_{p^j} o_a(H^j_t)$ \textbf{then} accept $o$, \textbf{else} reject $o$.
\end{strategy}
To illustrate the strategies of our SPE, consider the following examples:
\begin{example}
	\label{exm:full_info_no_intersection}
	Suppose that
	\[ \succ_{p^1} = o_6 \succ o_5\succ o_4 \succ o_3 \succ o_2 \succ o_1 \]
	\[ \succ_{p^2} = o_1 \succ o_3\succ o_2 \succ o_6 \succ o_4 \succ o_5. \]
	Following Definition~\ref{defn:lowers}, $L_1^1=\{o_2,o_1\}$ and $L_1^2=\{o_6,o_4,o_5 \}$. Therefore, $I_1=\emptyset$ and according to the offering strategy (Strategy~\ref{str:offer}) $p^1$ would offer $p^2$'s least preferred outcome - $o_5$. Then, according to the response strategy (Strategy~\ref{str:resp}) $p^2$ would reject, since $o_a((o_5)) = o_3 \succ_{p^2} o_5$, as we will show. In round $2$, $L_2^1=\{o_1,o_2\}$ and $L_2^2=\{o_6,o_4 \}$, and thus $p^2$ would offer $o_1$. Now $p^1$ would reject, and offer $o_4$, $p^2$ would reject and offer $o_2$, $p^1$ would reject and offer $o_6$, and in the final round $p^2$ would reject and offer $o_3$ which is accepted as the result of the negotiation since no other outcome is available.
\end{example}
\begin{example}
	\label{exm:full_info_intersection}
	Now suppose that
	\[ \succ_{p^1} = o_6 \succ o_5\succ o_4 \succ o_3 \succ o_2 \succ o_1 \]
	\[ \succ_{p^2} = o_1 \succ o_3\succ o_6 \succ o_2 \succ o_4 \succ o_5. \]
	Following Definition~\ref{defn:lowers}, $L_1^1=\{o_2,o_1\}$ and $L_1^2=\{o_2,o_4,o_5 \}$. Therefore, $I_1=\{o_2 \}$ and according to Strategy~\ref{str:offer} $p^1$ would offer $o_2$. Then, $p^2$ will reject, since $o_a((o_2)) = o_6 \succ_{p^2} o_2$, as we will show. In round $2$, $L_2^1=\{o_3,o_1\}$ and $L_2^2=\{o_4,o_5 \}$, $I_2=\emptyset$, and thus $p^2$ would offer $o_1$. In each subsequent round the parties would offer each other the least preferred outcomes, until the final round where $o_6$ will be accepted as the result of the negotiation. 
\end{example}

In order to prove that our strategies specify a SPE, we need a deeper understating of the offering and response strategies.
%
%
We note that in the offering strategy (Strategy~\ref{str:offer}), $p^i$ offers an outcome from the set $I_t$ if it is not empty. We now show the relation between the set $I_t$ and the set $JG_t$. 
\begin{lemma}
	\label{lemma:JG}
	$|JG_t|=|I_t|+1$ 
\end{lemma}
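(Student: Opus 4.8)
The plan is to prove the identity by a direct cardinality computation based on inclusion--exclusion, together with a short case split on the parity of $|O_t|$. By Definition~\ref{defn:lowers} both $L_t^i$ and $L_t^j$ are subsets of $O_t$, so $|Low_t| = |L_t^i \cup L_t^j| = |L_t^i| + |L_t^j| - |L_t^i \cap L_t^j| = |L_t^i| + |L_t^j| - |I_t|$. Since $JG_t = O_t \setminus Low_t$ and $Low_t \subseteq O_t$, this gives $|JG_t| = |O_t| - |L_t^i| - |L_t^j| + |I_t|$. Hence the lemma reduces to checking that $|O_t| - |L_t^i| - |L_t^j| = 1$ irrespective of the parity of $|O_t|$.

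First I would handle the case where $|O_t|$ is odd. Then $\lfloor |O_t|/2 \rfloor = (|O_t|-1)/2$, and Definition~\ref{defn:lowers} makes both $|L_t^i|$ and $|L_t^j|$ equal to $(|O_t|-1)/2$; their sum is $|O_t|-1$, so $|O_t| - |L_t^i| - |L_t^j| = 1$. Next I would handle the even case: here $|L_t^j| = |O_t|/2$ but $|L_t^i| = |O_t|/2 - 1$, since the set for the party who is about to make an offer is deliberately one element smaller, so again the two cardinalities sum to $|O_t|-1$ and the difference is $1$. Substituting back into the displayed identity yields $|JG_t| = |I_t| + 1$ in both cases.

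There is essentially no hard step here; the only point requiring care is the off-by-one in the even case, where $L_t^i$ has one fewer element than $L_t^j$ because it is $p^i$'s turn to offer. Getting that bookkeeping right is exactly what makes the extra ``$+1$'' appear uniformly across the two parities, and it is worth noting explicitly that the computation is legitimate precisely because $L_t^i, L_t^j \subseteq O_t$, so that inclusion--exclusion and the complement formula apply with no boundary issues.
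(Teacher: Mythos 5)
Your proof is correct and follows essentially the same route as the paper: a direct cardinality count split on the parity of $|O_t|$, with the only cosmetic difference being that you phrase the count via inclusion--exclusion on $L_t^i \cup L_t^j$ while the paper partitions $O_t$ into $L_t^i\setminus I_t$, $L_t^j\setminus I_t$, $I_t$, and $JG_t$. The key observation, $|L_t^i|+|L_t^j| = |O_t|-1$ in both parity cases, is handled correctly.
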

\begin{proof}
	Suppose that in round $t$, $|O_t|=m_t$ is odd. Then
	\[
	|L_t^i \setminus I_t| = \floor*{\frac{m_t}{2}}-|I_t|=\frac{m_t-1}{2}-|I_t|
	\]
	\[
	|L_t^j \setminus I_t|= \floor*{\frac{m_t}{2}}-|I_t|=\frac{m_t-1}{2}-|I_t|
	\]
	Therefore,
	\[
	|JG_t|=m_t-2 \cdot (\frac{m_t-1}{2}-|I_t|)-|I_t|=|I_t|+1
	\]
	Now suppose that $m_t$ is even. Then
	\[
	|L_t^i \setminus I_t | = \floor*{\frac{m_t}{2}}-1-|I_t|=\frac{m_t}{2}-1-|I_t|
	\]
	\[
	|L_t^j \setminus I_t| = \floor*{\frac{m_t}{2}}-|I_t|=\frac{m_t}{2}-|I_t|
	\]
	Therefore,
	\[
	|JG_t|=m_t-(\frac{m_t}{2}-1-|I_t|)-(\frac{m_t}{2}-|I_t|)-|I_t|=|I_t|+1
	\]
\end{proof}
\noindent Considering Lemma~\ref{lemma:not_in_lowers}, we show a simple corollary. Let $o_{eq}$ be the SPE result. We get:
\begin{corollary}
	\label{cor:not_in_lowers}
	$o_{eq} \notin Low_t$.
\end{corollary}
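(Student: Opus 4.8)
The plan is to obtain the corollary essentially for free from Lemma~\ref{lemma:not_in_lowers}, by observing that the play induced by a subgame perfect equilibrium is, in particular, \emph{rational} play for both parties. Recall that a strategy profile is an SPE if it induces a Nash equilibrium in every subgame; in our game this means that at every round, both when it is her turn to offer and when it is her turn to respond, each party selects an action that is a best response against the continuation strategies of her opponent. Hence both parties are rational in the sense required by Lemma~\ref{lemma:not_in_lowers}, and since \cite{anbarci93noncooperative} guarantees that the SPE result is unique, the outcome $o_{eq}$ reached when both parties follow their SPE strategies is well defined and the lemma's conclusion applies to it directly.

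Concretely, I would argue as follows. Fix a round $t$ that is reached along the SPE path, so $O_t = O \setminus H^i_t$ is the set of available outcomes with $|O_t| = m_t$. From the proof of Lemma~\ref{lemma:not_in_lowers}, starting at round $t$ each party $p^k$ ($k\in\{1,2\}$) possesses a strategy guaranteeing a result ranked strictly above her $|L^k_t|$ lowest outcomes: she keeps rejecting any offer lying in her own set $L^k_t$ as long as she retains rejection power, which is feasible since she can reject at least $|L^k_t|$ of the offers she will receive. Because an SPE strategy is a best response in the subgame starting at round $t$, each party obtains in the SPE at least the payoff she could secure by deviating to this rejection strategy; since that bound is strict, we get $o_{eq} \succ_{p^k} o'$ for every $o' \in L^k_t$, for both $k$. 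Therefore $o_{eq} \notin L^i_t$ and $o_{eq} \notin L^j_t$, so by the definition of $Low_t$ we conclude $o_{eq} \notin Low_t$.

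The one point I would be careful about — and the closest thing to an obstacle here — is the quantifier over rounds: $Low_t$ is defined relative to the history-dependent set $O_t$, so the statement is meaningful only for histories $H^i_t$ that actually arise, and along the SPE path these histories form an initial segment of the play that ends as soon as an offer is accepted; for exactly those rounds the argument above goes through. I would also make explicit why invoking ``best response'' is legitimate even though Lemma~\ref{lemma:not_in_lowers} was phrased for arbitrary rational play: an SPE strategy yields, in every subgame, a payoff no worse than any deviation, and the rejection strategy used in the lemma's proof is one such deviation, so the SPE payoff weakly dominates it; combined with the strictness of the lemma's bound, this forces the SPE result to strictly beat each party's $|L^k_t|$ lowest outcomes, which is precisely the claim.
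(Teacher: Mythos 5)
Your proposal is correct and follows essentially the same route as the paper, which obtains Corollary~\ref{cor:not_in_lowers} directly from Lemma~\ref{lemma:not_in_lowers} by noting that play under an SPE is in particular rational play, so the lemma's exclusion of $Low_t$ applies to $o_{eq}$. Your extra care about why SPE best responses inherit the guarantee secured by the rejection strategy in the lemma's proof is a faithful elaboration of the same argument, not a different approach.
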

If we combine the findings from Corollary~\ref{cor:not_in_lowers} and Lemma~\ref{lemma:JG}, we get that if the set $I_t$ is empty, i.e., the intersection between the lower parts of the preference orders of the parties is empty, then the set $JG_t$ contains only one outcome, $o_{eq}$.
\begin{corollary}
	\label{lemma:no_inter_eq}
	If $I_t=\emptyset$ then $JG_t = \{o_{eq}\}$.
\end{corollary}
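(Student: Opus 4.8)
The plan is to combine the counting identity of Lemma~\ref{lemma:JG} with the exclusion result of Corollary~\ref{cor:not_in_lowers}. First I would instantiate Lemma~\ref{lemma:JG} at the round $t$ in question: since $I_t=\emptyset$, we immediately get $|JG_t|=|I_t|+1=1$, so $JG_t$ is a singleton. It therefore remains only to identify which outcome that singleton contains, and the natural candidate is $o_{eq}$.

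Next I would argue that $o_{eq}\in JG_t$. By definition $JG_t=O_t\setminus Low_t$, so this amounts to two observations: $o_{eq}\in O_t$ and $o_{eq}\notin Low_t$. The second is exactly Corollary~\ref{cor:not_in_lowers}. For the first, I would note that $o_{eq}$, being the SPE result of the negotiation starting from round $t$, is the outcome accepted in some round $t'\ge t$; since no outcome may be offered twice, it cannot have been offered and rejected before round $t$, hence $o_{eq}\in O_t$. (This is the same framing already used implicitly in Lemma~\ref{lemma:not_in_lowers}, where $o$ is taken to be the result "from round $t$" and $O_t$ is the available set at that round.)

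Finally I would put the two pieces together: $o_{eq}\in JG_t$ and $|JG_t|=1$ force $JG_t=\{o_{eq}\}$, which is the claim. The only point requiring any care is the membership $o_{eq}\in O_t$, i.e. making sure we are comparing the SPE result against the outcome set of the very round $t$ for which $I_t$ is taken to be empty; everything else is a direct substitution into the already-established lemma and corollary, so I expect no substantive obstacle here.
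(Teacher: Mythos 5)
Your argument is correct and follows essentially the same route as the paper: Lemma~\ref{lemma:JG} gives $|JG_t|=1$, and the exclusion $o_{eq}\notin Low_t$ (Corollary~\ref{cor:not_in_lowers}, which the paper invokes via Lemma~\ref{lemma:not_in_lowers} in contrapositive form) pins down the unique element. Your explicit remark that $o_{eq}\in O_t$ is a minor point the paper leaves implicit, but it does not change the substance of the proof.
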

\begin{proof}
	From Lemma~\ref{lemma:JG}, $|JG_t|=1$. Assume that $o_{eq} \notin JG_t$, then $o_{eq} \in Low_t$, in contradiction to Lemma~\ref{lemma:not_in_lowers}.
\end{proof}
Next, we show how the transition from round $t$ to round $t+1$ affects the number of outcomes in $L^k_{t+1}$, $k\in\{1,2\}$.
\begin{lemma}
	\label{lemma:p_2_loss_outcome}
	Suppose that in round $t$, $p^i$ offered an outcome $o$ and $p^j$ rejected it, then in round $t+1$, $|L_{t+1}^i|=|L_{t}^j|-1$ and
	$|L_{t+1}^j|=|L_{t}^i|$ 
\end{lemma}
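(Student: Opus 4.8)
The plan is to prove this by a direct case analysis on the parity of $m_t = |O_t|$, relying only on Definition~\ref{defn:lowers} together with two structural facts about the protocol. First, since $p^j$ rejected the offer $o$ in round $t$, exactly one outcome leaves the pool, so $m_{t+1} = m_t - 1$. Second, the turn alternates, so the party labeled $p^i$ in round $t+1$ (the one who offers then) is precisely the party labeled $p^j$ in round $t$, and the party labeled $p^j$ in round $t+1$ is the old $p^i$. The key observation that makes the whole argument bookkeeping is that, by Definition~\ref{defn:lowers}, the \emph{cardinality} of $L^k_s$ depends only on $|O_s|$, on its parity, and on whether $p^k$ is the offering party in round $s$ --- not on which particular outcomes have already been discarded --- so it suffices to track these cardinalities.

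I would then split into two cases. If $m_t$ is odd, then $|L^i_t| = |L^j_t| = (m_t-1)/2$, while $m_{t+1} = m_t - 1$ is even; since the offering party in round $t+1$ is the old $p^j$, Definition~\ref{defn:lowers} gives $|L^i_{t+1}| = m_{t+1}/2 - 1 = (m_t-1)/2 - 1 = |L^j_t| - 1$ and $|L^j_{t+1}| = m_{t+1}/2 = (m_t-1)/2 = |L^i_t|$, as claimed. If $m_t$ is even, then $|L^i_t| = m_t/2 - 1$ and $|L^j_t| = m_t/2$, while $m_{t+1} = m_t - 1$ is odd; hence $|L^i_{t+1}| = \lfloor m_{t+1}/2 \rfloor = m_t/2 - 1 = |L^j_t| - 1$ and $|L^j_{t+1}| = \lfloor m_{t+1}/2 \rfloor = m_t/2 - 1 = |L^i_t|$, which again establishes both identities.

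There is no combinatorial or strategic content here beyond the two structural facts above; the specific identity of the offered outcome $o$ plays no role other than guaranteeing that the pool shrinks by one. The only place that demands care --- and the ``main obstacle,'' such as it is --- is keeping the swap of the $i/j$ labels straight across the round boundary, and remembering that it is the asymmetric clause of Definition~\ref{defn:lowers} (the ``$-1$'' applied to the offering party when $|O_t|$ is even) that produces the asymmetry between the two claimed equalities.
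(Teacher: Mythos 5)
Your proposal is correct and follows essentially the same route as the paper: a parity case analysis on $m_t$, using the cardinality formulas of Definition~\ref{defn:lowers} together with the facts that the pool shrinks by one and the offering/responding roles swap between rounds. The arithmetic in both cases checks out, including the careful handling of the label swap and of the ``$-1$'' clause for the offering party when the pool size is even.
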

\begin{proof}
	Assume $|O_t|=m_t$ is even, then by definition $|L_t^i|=\frac{m_t}{2}-1$ and $|L_t^j|=\frac{m_t}{2}$.
	After that $p^i$ offered the outcome $o$ and $p^j$ rejected it, $m_{t+1}$ is odd, and the roles are switched between $p^i$ and $p^j$. Therefore, $|L_{t+1}^i|=|L_{t+1}^j|=\floor*{\frac{m_{t+1}}{2}}=\floor*{\frac{m_t-1}{2}}=\floor*{\frac{m_t}{2}-\frac{1}{2}}=\frac{m_t}{2}-1$.
	Now assume that $m_t$ is odd, then $|L_t^i|=|L_t^j|=\floor*{\frac{m_t}{2}}=\frac{m_t-1}{2}$.
	After that $p^i$ offered the outcome $o$ and $p^j$ rejected it, $m_{t+1}$ is even, and the roles are switched between $p^i$ and $p^j$. Therefore, $|L_{t+1}^j|=\frac{m_t-1}{2}$ and $|L_{t+1}^i|=\frac{m_t-1}{2}-1$ .
\end{proof}
We note that the number of outcomes in $L^k_{t}$ is important, since we already showed in Corollary~\ref{cor:not_in_lowers} that these are the outcomes that cannot be an equilibrium result. Indeed, it is more important to understand how the transition from round $t$ to round $t+1$ affects which outcomes become part of $L^k_{t+1}$. Obviously, it depends on the offer that was made in round $t$. The following three lemmas analyze this transition, based on the offers that are made according to Strategy~\ref{str:offer}. Specifically, Lemma~\ref{lemma:lower_i_not_change} together with Lemma~\ref{lemma:lower_j_not_change} cover the offering strategy where $I_t = \emptyset$, and Lemma~\ref{lemma:lower_j_not_change} together with Lemma~\ref{lemma:lower_i_add} cover the offering strategy where $I_t \neq \emptyset$.
\begin{lemma}
	\label{lemma:lower_i_not_change}
	In round $t$, if $p^i$ offers $o \notin L^i_t$ and $p^j$ rejects it, then $L_{t+1}^j \leftarrow L_{t}^i$ 
\end{lemma}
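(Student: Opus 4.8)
I want to show that if, in round $t$, $p^i$ offers an outcome $o \notin L^i_t$ and $p^j$ rejects it, then $L_{t+1}^j = L_t^i$. By Lemma~\ref{lemma:p_2_loss_outcome} we already know the \emph{cardinalities} match: $|L_{t+1}^j| = |L_t^i|$. So the task reduces to showing that the specific outcomes comprising $L_{t+1}^j$ — the lowest-ranked outcomes in $\succ_{p^j}$ among $O_{t+1}$, counted according to Definition~\ref{defn:lowers} with the roles of $i,j$ swapped in round $t+1$ — are exactly the outcomes of $L_t^i$.

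\textbf{Plan.} The key observation is that in round $t+1$ the offering party is $p^j$ (roles switch), so $L_{t+1}^j$ is defined as the $\lfloor |O_{t+1}|/2 \rfloor$ lowest-ranked outcomes of $\succ_{p^j}$ when $|O_{t+1}|$ is odd, or the $|O_{t+1}|/2 - 1$ lowest when even. The plan is to first establish that $O_{t+1} = O_t \setminus \{o\}$ where $o \notin L^i_t$, so every element of $L^i_t$ survives into $O_{t+1}$; hence $L^i_t \subseteq O_{t+1}$ and $|L^i_t|$ of the available outcomes sit somewhere in $p^j$'s ranking. Then I would argue that $L^i_t$ is precisely a \emph{bottom segment} of $\succ_{p^j}$ restricted to $O_{t+1}$: I need that no outcome outside $L^i_t$ (but in $O_{t+1}$) is ranked below some outcome of $L^i_t$ in $\succ_{p^j}$. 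This is where Definition~\ref{defn:lowers} does the work — $L^i_t$ is a bottom segment of $\succ_{p^i}$, but we need it to coincide with a bottom segment of $\succ_{p^j}$. The bridge should be that $L^i_t \subseteq Low_t$, and, more importantly, the outcomes of $O_t \setminus L^i_t$ that are \emph{not} in $L^j_t$ are exactly $JG_t$ (which by Corollary~\ref{lemma:no_inter_eq}-type reasoning, or directly from the $I_t=\emptyset$ hypothesis implied here, is small), while the rest of $O_t \setminus L^i_t$ lies in $L^j_t \setminus L^i_t$. The crucial point: since $I_t = L^i_t \cap L^j_t$, and when the offered $o \notin L^i_t$ we are in a situation consistent with the analysis, the set $L^j_t$ already contains $|L^j_t|$ outcomes; removing $o$ and re-counting for $p^j$ at round $t+1$ should shave off exactly the elements of $L^j_t \setminus L^i_t$ from the bottom and push in the elements of $L^i_t \setminus L^j_t$, netting out to $L^i_t$ itself.

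\textbf{Execution sketch.} Concretely, I would split into the two parity cases for $m_t = |O_t|$, mirroring the proof of Lemma~\ref{lemma:p_2_loss_outcome}. In each case: (1) write down $|L^i_t|$, $|L^j_t|$, and $|L^{j}_{t+1}|$ explicitly; (2) note $o \notin L^i_t$ and $o \in O_t$, so $O_{t+1} \supseteq L^i_t$; (3) identify the bottom $|L^{j}_{t+1}|$ outcomes of $\succ_{p^j}$ within $O_{t+1}$ and show this set equals $L^i_t$ by a counting/containment argument — specifically, show $L^i_t$ consists of outcomes all ranked (in $\succ_{p^j}$) below every outcome of $O_{t+1} \setminus L^i_t$, using that $O_{t+1} \setminus L^i_t = JG_t \cup (L^j_t \setminus L^i_t) \setminus \{o\}$ and that the elements of $JG_t$ and of $L^j_t \setminus L^i_t$ sit in the appropriate places of $\succ_{p^j}$ relative to $L^i_t$. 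The element $L^i_t \setminus L^j_t$ requires the extra care: these are outcomes low for $p^i$ but not among the bottom for $p^j$ at round $t$, yet they must still end up among the bottom $|L^{j}_{t+1}|$ for $p^j$ at round $t+1$ — this follows because the removal of $o$ plus the shrinking count $|L^{j}_{t+1}| = |L^i_t|$ forces it, given that everything strictly below them in $\succ_{p^j}$ is already inside $L^i_t$.

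\textbf{Main obstacle.} The hard part will be step (3): rigorously arguing that $L^i_t$, which is defined purely via $\succ_{p^i}$, coincides with a bottom segment of $\succ_{p^j}$ restricted to $O_{t+1}$. This is not a cardinality fact and needs a genuine order argument; I expect it to hinge on characterizing $O_t \setminus L^i_t$ as $JG_t \,\cup\, (L^j_t \setminus L^i_t)$ and then checking, outcome by class, where these lie in $\succ_{p^j}$ — with the subtle case being the outcomes in $L^i_t \setminus L^j_t$, for which I'll lean on the precise counts from Definition~\ref{defn:lowers} and Lemma~\ref{lemma:JG} to pin down that nothing outside $L^i_t$ can be $\succ_{p^j}$-below an element of $L^i_t$ once $o$ is removed.
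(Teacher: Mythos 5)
Your proposal founders on a misreading of the paper's indexing convention, and this changes what the lemma actually asserts. The superscripts $i$ and $j$ are round-relative roles (offerer and responder of the current round), so they switch every round: the party labelled $p^j$ at round $t+1$ is the very party that was labelled $p^i$ at round $t$. Hence $L_{t+1}^j$ is a bottom segment of the round-$t$ \emph{offerer's} preference order restricted to $O_{t+1}$, not of $\succ_{p^j}$ as you assume throughout. The lemma is therefore a same-party statement --- the offerer's lower set is unchanged when she becomes the responder --- and its proof is short: by Lemma~\ref{lemma:p_2_loss_outcome} the required cardinality is unchanged, $|L_{t+1}^j| = |L_t^i|$, and since the only outcome removed is $o \notin L_t^i$, which is ranked above every element of $L_t^i$ in that party's order (as $L_t^i$ is a bottom segment per Definition~\ref{defn:lowers}), the lowest $|L_t^i|$ outcomes of $O_{t+1}$ in that order are still exactly $L_t^i$. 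This is essentially the paper's one-line argument.

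The cross-party claim you set out to prove in step (3) --- that $L^i_t$ coincides with a bottom segment of $\succ_{p^j}$ restricted to $O_{t+1}$ --- is false in general, which is why the ``main obstacle'' you flag cannot be overcome. Example~\ref{exm:full_info_no_intersection} already refutes it: at round $1$, $p^1$ offers $o_5 \notin L_1^1 = \{o_1,o_2\}$ and it is rejected; the two lowest remaining outcomes of $\succ_{p^2}$ in $O_2$ are $\{o_4,o_6\} \neq \{o_1,o_2\}$, whereas the lemma's actual assertion, $L_2^j = L_2^1 = \{o_1,o_2\} = L_1^i$ (same party, namely $p^1$), holds. A secondary issue: you lean on an ``$I_t=\emptyset$ hypothesis'' and on the offer being $o^-_t$, but the lemma assumes neither; it applies to any rejected offer $o \notin L^i_t$ and is reused outside the equilibrium-path context, e.g., in the maxmin analysis of Theorem~\ref{thm:min_max}.
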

\begin{proof}
	According to Lemma \ref{lemma:p_2_loss_outcome}, the sets $L^j_{t+1}$ and $L^i_t$ have the same size. Therefore, if $p^i$ offers $o \notin L^i_t$ and $p^j$ rejects it, we can be assured that $L^j_{t+1} = L^i_t$.
\end{proof}
\begin{lemma}
	\label{lemma:lower_j_not_change}
	In round $t$, if $p^i$ offers $o \in L^j_t$ and $p^j$ rejects it, then $L_{t+1}^i \leftarrow L_{t}^j \setminus \{ o \}$
\end{lemma}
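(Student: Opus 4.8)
The plan is to count sizes and then argue by containment. First I would invoke Lemma~\ref{lemma:p_2_loss_outcome}: when $p^i$ offers some outcome in round $t$ and $p^j$ rejects it, we have $|L_{t+1}^i| = |L_t^j| - 1$. So the target set $L_t^j \setminus \{o\}$ has exactly the right cardinality, since $o \in L_t^j$ by hypothesis. It therefore suffices to show that $L_{t+1}^i \subseteq L_t^j \setminus \{o\}$, or equivalently that $L_{t+1}^i = L_t^j \setminus \{o\}$ once the sizes match.

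Next I would recall what $L_{t+1}^i$ actually is. After the rejection, the roles switch, so at round $t+1$ it is $p^j$ (in the old labelling) whose preference order determines $L_{t+1}^i$ — that is, $L_{t+1}^i$ consists of the bottom $|L_{t+1}^i|$ outcomes of $\succ_{p^j}$ restricted to $O_{t+1} = O_t \setminus \{o\}$. The key observation is that removing $o$ from $O_t$ removes exactly one element of $L_t^j$ (the bottom block of $\succ_{p^j}$ in $O_t$), so the bottom block of $\succ_{p^j}$ in $O_{t+1}$ is simply $L_t^j$ with $o$ deleted, provided we are taking a block of size $|L_t^j| - 1$. And that is precisely the size we computed. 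Hence $L_{t+1}^i = L_t^j \setminus \{o\}$.

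The main thing to be careful about — the only real obstacle — is the bookkeeping around parity and the role swap: one must check that the size $|L_{t+1}^i|$ coming out of Lemma~\ref{lemma:p_2_loss_outcome} really is $|L_t^j| - 1$ in both the even-$m_t$ and odd-$m_t$ cases, and that in each case $L_{t+1}^i$ is defined relative to $\succ_{p^j}$ (the party who just rejected) rather than $\succ_{p^i}$. Once that alignment is confirmed, the set identity follows immediately: deleting the single lowest-ranked available outcome $o$ (which lies in $L_t^j$) from the bottom block of $\succ_{p^j}$ yields exactly the next-smallest bottom block, with no other outcome entering or leaving. I do not expect any nontrivial calculation here beyond re-reading Lemma~\ref{lemma:p_2_loss_outcome}.
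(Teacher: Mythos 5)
Your proposal is correct and follows essentially the same route as the paper: invoke Lemma~\ref{lemma:p_2_loss_outcome} for the size $|L_{t+1}^i| = |L_t^j| - 1$, then use the fact that $o \in L_t^j$ is the only outcome removed so the bottom block of $\succ_{p^j}$ in $O_{t+1}$ is exactly $L_t^j \setminus \{o\}$. Your version just makes the role-swap bookkeeping more explicit than the paper's terser argument.
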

\begin{proof}
	According to Lemma \ref{lemma:p_2_loss_outcome}, the set $L^i_{t+1}$ contains one outcome less than the set $L^j_t$. Therefore, if $p^i$ offers $o \in L^j_t$ and $p^j$ rejects it, $o$ is the only outcome that becomes unavailable in round $t+1$, and we can thus be assured that $L_{t+1}^i = L_{t}^j \setminus \{ o \}$.
\end{proof}
\begin{lemma}
	\label{lemma:lower_i_add}
	In round $t$, if $p^i$ offers $o \in L^i_t$ and $p^j$ rejects it, then $L_{t+1}^j \leftarrow L_{t}^i \setminus \{o\} \cup \{o'\}$ 
\end{lemma}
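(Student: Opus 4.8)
The plan is to follow the same template as the proofs of Lemmas~\ref{lemma:lower_i_not_change} and~\ref{lemma:lower_j_not_change}: first fix the \emph{size} of the target set using Lemma~\ref{lemma:p_2_loss_outcome}, and then identify exactly which outcomes it contains by tracking the single outcome that leaves the pool. First I would record the two structural facts we are allowed to use: since a single offer $o$ is made and rejected in round $t$, we have $O_{t+1}=O_t\setminus\{o\}$, so $o$ is the only outcome that becomes unavailable; and since the roles swap between the two rounds, $L_{t+1}^j$ is the low set of the party $p^i$ (the one who just offered $o$), which by Lemma~\ref{lemma:p_2_loss_outcome} satisfies $|L_{t+1}^j|=|L_t^i|$.

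Next I would unfold the definitions. By Definition~\ref{defn:lowers}, $L_t^i$ is the block of the $|L_t^i|$ lowest-ranked outcomes of $\succ_{p^i}$ inside $O_t$, and $L_{t+1}^j$ is the block of the $|L_t^i|$ lowest-ranked outcomes of $\succ_{p^i}$ inside $O_{t+1}=O_t\setminus\{o\}$. Since by hypothesis $o\in L_t^i$, after deleting $o$ the set $L_t^i\setminus\{o\}$ still consists of the $|L_t^i|-1$ lowest outcomes of $\succ_{p^i}$ in $O_{t+1}$; to restore the cardinality $|L_t^i|$ we must add exactly one more outcome, and the outcome that joins is precisely $o'$, the outcome ranked immediately above the block $L_t^i$ in $\succ_{p^i}$ restricted to $O_t$. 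Hence $L_{t+1}^j=(L_t^i\setminus\{o\})\cup\{o'\}$, which is the claim. This is the complementary case to Lemma~\ref{lemma:lower_i_not_change}, where $o\notin L_t^i$ and so nothing new enters.

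The only point that needs a separate word is that $o'$ is well defined, i.e., that $L_t^i$ is a \emph{proper} subset of $O_t$. The hypothesis $o\in L_t^i$ forces $L_t^i\neq\emptyset$, which (checking both parities in Definition~\ref{defn:lowers}) can only happen when $|O_t|\ge 3$; and in that range $|L_t^i|\le\lfloor|O_t|/2\rfloor<|O_t|$, so at least one outcome of $O_t$ lies strictly above $L_t^i$ in $\succ_{p^i}$, and $o'$ is simply the lowest-ranked such outcome. I do not expect a real obstacle here — as in the two preceding lemmas the content is light — the only thing that requires care is keeping the $i/j$ role-switching bookkeeping straight and being explicit that the added outcome $o'$ is the one directly above $L_t^i$ in $p^i$'s order.
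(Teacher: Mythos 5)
Your proposal is correct and follows essentially the same route as the paper: apply Lemma~\ref{lemma:p_2_loss_outcome} to get $|L_{t+1}^j|=|L_t^i|$, observe that $o$ is the only outcome removed from the pool, and conclude that exactly one new outcome $o'$ must join. Your extra care in identifying $o'$ as the outcome ranked directly above the block $L_t^i$ in $\succ_{p^i}$ and checking it is well defined only sharpens what the paper leaves implicit.
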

\begin{proof}
	According to lemma \ref{lemma:p_2_loss_outcome}, the sets $L^j_{t+1}$ and $L^i_t$ have the same size. Therefore, if $p^i$ offers $o \in L^i_t$ and $p^j$ rejects it, $o$ is the only outcome that becomes unavailable in round $t+1$, and thus there must be another outcome $o' \in O_{t+1}$ that becomes part of $L^j_{t+1}$.
\end{proof}
%
%
Note that when $p^j$ follows Strategy~\ref{str:resp} she computes the outcome $o_a(H^j_t)$ to decide whether to accept or reject the offer that she gets from $p^i$. 
By definition,
\begin{lemma}
	\label{lemma:the_same}
	$o_a(H^j_t) = o_a(H^j_t + o^i_{t+1})$.
\end{lemma}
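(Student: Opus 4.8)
The plan is to prove the identity by unfolding both sides to literally the same finite sequence of outcomes, so there is nothing to compute. Recall that by definition $o_a(H^j_t)$ is the unique outcome of $O_m = O \setminus H^i_m$, where $H^i_m = H^j_t + o^i_{t+1} + \cdots + o^i_{m-1}$ and each continuation offer $o^i_s$, for $t+1 \le s \le m-1$, is the outcome that Strategy~\ref{str:offer} prescribes at round $s$ given the history accumulated so far, i.e.\ some $o \in I_s$ when $I_s = L^i_s \cap L^j_s \neq \emptyset$ and $o^-_s$ otherwise, all computed with respect to $O_s = O \setminus H^i_s$. In particular the first continuation offer $o^i_{t+1}$ is the one Strategy~\ref{str:offer} makes at round $t+1$ from the history $H^i_{t+1}=H^j_t$ (the round-$t$ offer recorded in $H^j_t$ having been rejected).

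First I would check that $H^j_t + o^i_{t+1}$ is a legitimate response-history for $p^j$ at round $t+1$: since $o^i_{t+1}$ is either an element of $I_{t+1}\subseteq O_{t+1}$ or equals $o^-_{t+1}\in O_{t+1}$, it is an available outcome that does not already appear in $H^i_{t+1}=H^j_t$, so $H^j_{t+1} := H^j_t + o^i_{t+1}$ is a valid argument of $o_a(\cdot)$, and moreover it is exactly the history $p^j$ faces at round $t+1$ in the simulation underlying $o_a(H^j_t)$. Unfolding the definition once more, $o_a(H^j_t + o^i_{t+1})$ is the unique outcome of $O \setminus \widetilde{H}^i_m$, where $\widetilde{H}^i_m = (H^j_t + o^i_{t+1}) + \widetilde{o}^i_{t+2} + \cdots + \widetilde{o}^i_{m-1}$ and each $\widetilde{o}^i_s$ is again the Strategy~\ref{str:offer} offer at round $s$, this time computed from the history accumulated starting from $H^j_{t+1}$.

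The key step is to show $\widetilde{o}^i_s = o^i_s$ for every $s$ with $t+2 \le s \le m-1$, by induction on $s$. The crucial observation is that the offer Strategy~\ref{str:offer} makes at round $s$ depends only on $O_s = O \setminus H^i_s$ (through $L^i_s$, $L^j_s$, and $o^-_s$), hence only on the set of outcomes offered in rounds $1,\dots,s-1$; the continuation of $o_a$ is ``history-determined''. For the base case $s=t+2$ the two computations start from the same accumulated history, namely $H^j_t + o^i_{t+1}$ in both: on the $o_a(H^j_t)$ side this is because $o^i_{t+1}$ is generated internally by Strategy~\ref{str:offer}, and on the $o_a(H^j_t + o^i_{t+1})$ side it is the history supplied as input — and these agree precisely because the $o^i_{t+1}$ we append on the right is that same strategy-prescribed offer. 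Given that the histories coincide up to round $s$, Strategy~\ref{str:offer} is fed identical input at round $s$, so $\widetilde{o}^i_s = o^i_s$ and the histories continue to coincide at round $s+1$. Hence $\widetilde{H}^i_m = H^i_m$, so $O \setminus \widetilde{H}^i_m = O \setminus H^i_m$, and the two unique leftover outcomes are equal: $o_a(H^j_t) = o_a(H^j_t + o^i_{t+1})$.

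The proof carries no real mathematical difficulty; the only points requiring care are the index bookkeeping (checking that $H^j_t+o^i_{t+1}$ has the right length and is a well-formed round-$(t+1)$ history for $p^j$) and making explicit that the recursive continuation used in the definition of $o_a$ reads nothing but the multiset of previously offered outcomes, so that the tail sequence generated from $H^j_t$ and the tail generated from $H^j_t + o^i_{t+1}$ are literally identical. That is exactly the content of the ``By definition'' preceding the statement, so the write-up can be kept very short.
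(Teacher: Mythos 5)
Your proposal is correct and matches the paper's treatment: the paper dispatches this lemma with ``By definition,'' and your argument is simply the explicit unfolding of that definition, showing that both sides generate the identical continuation history $H^j_t + o^i_{t+1} + \cdots + o^i_{m-1}$ and hence the same single leftover outcome. The careful induction on the tail offers is a fine (if optional) elaboration of exactly the observation the paper takes as immediate.
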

We now show that if, in a given round $t$, $p^i$ follows Strategy~\ref{str:offer}, i.e., $o_a(H^j_t) = o_a(H^i_t+o^i_t)$, then $o_a(H^j_t)$ has some desirable properties. For example, it is Pareto optimal in $O_t$, i.e., $\forall o \in O_t \setminus \{o_a(H^i_t+o^i_t)\}$, $o_a(H^i_t+o^i_t) \succ_{p^i} o$ or $o_a(H^i_t+o^i_t) \succ_{p^j} o$.
For ease of notation, let $o_{at} = o_a(H^i_t + o^i_t)$.
\begin{lemma}
	\label{lemma:whatis_oa}
	Given any history $H^i_t$, $o^i_t \prec_{p^j} o_{at}$, $o_{at} \in JG_t$, and $o_{at}$ is Pareto optimal in $O_t$.
\end{lemma}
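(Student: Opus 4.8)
Proof proposal. The plan is to prove the three claims simultaneously by induction on $|O_t|$ (equivalently, by backward induction along the ``always reject'' play that defines $o_{at}$), using the transition Lemmas~\ref{lemma:lower_i_not_change}--\ref{lemma:lower_i_add} and the identity in Lemma~\ref{lemma:the_same}. For the base case $|O_t|=2$: here $L^i_t=\emptyset$ and $L^j_t=\{o^-_t\}$, so $I_t=\emptyset$, $JG_t=O_t\setminus\{o^-_t\}$ is a singleton, and Strategy~\ref{str:offer} makes $p^i$ offer $o^i_t=o^-_t$; after $p^j$ rejects it, the unique remaining outcome is $o_{at}$, which is exactly the element of $JG_t$, satisfies $o^-_t\prec_{p^j}o_{at}$, and is (vacuously) Pareto optimal in the two\nobreakdash-element set $O_t$.

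For the inductive step, assume $|O_t|=m_t\ge 3$ and that the lemma holds for every history whose available set has $m_t-1$ outcomes. The first ingredient is the identity $o_{at}=o_a\bigl(H^i_{t+1}+o^i_{t+1}\bigr)$ with $H^i_{t+1}=H^i_t+o^i_t$, which is Lemma~\ref{lemma:the_same} applied to $H^j_t=H^i_t+o^i_t$; thus $o_{at}$ coincides with the ``$o_a$'' quantity evaluated at round $t+1$ over $O_{t+1}=O_t\setminus\{o^i_t\}$, to which the induction hypothesis applies (note $o^i_t\in L^j_t\subseteq Low_t$, so $o^i_t\ne o_{at}$ and $o_{at}\in O_{t+1}$). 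The second ingredient is $JG_{t+1}\subseteq JG_t$, obtained by a short set computation after splitting on $I_t$. If $I_t=\emptyset$ then $o^i_t=o^-_t\in L^j_t$ and $o^-_t\notin L^i_t$, so Lemmas~\ref{lemma:lower_i_not_change} and~\ref{lemma:lower_j_not_change} give $L^j_{t+1}=L^i_t$ and $L^i_{t+1}=L^j_t\setminus\{o^i_t\}$, whence $Low_{t+1}=Low_t\setminus\{o^i_t\}$ and $JG_{t+1}=JG_t$. If $I_t\ne\emptyset$ then $o^i_t\in L^i_t\cap L^j_t$, so Lemmas~\ref{lemma:lower_j_not_change} and~\ref{lemma:lower_i_add} give $L^i_{t+1}=L^j_t\setminus\{o^i_t\}$ and $L^j_{t+1}=(L^i_t\setminus\{o^i_t\})\cup\{o'\}$ for some $o'\in O_{t+1}$, whence $Low_{t+1}=(Low_t\setminus\{o^i_t\})\cup\{o'\}$ and, since $o^i_t\in Low_t$, $JG_{t+1}=JG_t\setminus\{o'\}\subseteq JG_t$.

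Combining: by the induction hypothesis at round $t+1$, $o_{at}=o_a(H^i_{t+1}+o^i_{t+1})\in JG_{t+1}\subseteq JG_t$, giving the second claim. For the first claim, $o_{at}\in JG_t$ forces $o_{at}\notin Low_t\supseteq L^j_t$, while $o^i_t\in L^j_t$; since $L^j_t$ is the set of $\lfloor |O_t|/2\rfloor$ lowest outcomes of $\succ_{p^j}$ in $O_t$, every outcome outside $L^j_t$ is ranked by $p^j$ above every outcome inside it, so $o_{at}\succ_{p^j}o^i_t$. For Pareto optimality in $O_t=O_{t+1}\cup\{o^i_t\}$: for $o\in O_{t+1}\setminus\{o_{at}\}$, the induction hypothesis (Pareto optimality of $o_{at}$ in $O_{t+1}$), read back through the swap of the roles $i,j$ between rounds $t$ and $t+1$, gives $o_{at}\succ_{p^i}o$ or $o_{at}\succ_{p^j}o$; for $o=o^i_t$ we just showed $o_{at}\succ_{p^j}o^i_t$. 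Hence $o_{at}$ is Pareto optimal in $O_t$, completing the induction.

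The main obstacle I anticipate is bookkeeping the role swap between $p^i$ and $p^j$ across rounds, both in applying the transition lemmas (checking in each branch that $o^i_t$ sits in $L^i_t$, in $L^j_t$, or outside $L^i_t$, as each lemma requires) and in translating the round\nobreakdash-$(t+1)$ induction hypothesis, which is phrased with the round\nobreakdash-$(t+1)$ roles, back to round $t$; the set manipulations themselves are routine once the cases are set up.
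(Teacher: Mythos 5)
Your proof is correct and follows essentially the same route as the paper's: induction on the number of available outcomes, the base case with two outcomes, the identity $o_{at}=o_a(H^i_{t+1}+o^i_{t+1})$ from Lemma~\ref{lemma:the_same}, and the case split on $I_t$ using Lemmas~\ref{lemma:lower_i_not_change}--\ref{lemma:lower_i_add} to get $JG_{t+1}\subseteq JG_t$, from which all three claims follow as in the paper. Your version merely spells out the set computations (e.g.\ $Low_{t+1}$ and $JG_{t+1}=JG_t\setminus\{o'\}$) in more detail than the published argument.
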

\begin{proof}
	We prove by induction on $m$. If $m=2$ and $t=1$, without loss of generality (WLOG) assume that $\succ_{p^2} = o_1 \succ o_2$. Since $I_t = \emptyset$, $o^1_t = o^-_t = o_2$ and $H^1_t = ()$ by definition. Thus, $o_{at} = o_a(H^1_t + o^1_t) = o_1$. In addition, $JG_t=\{o_1\}$, and therefore $o^1_t \prec_{p^2} o_{at}$ and $o_{at} \in JG_1$ as required. Since $O_t \setminus \{o_{at}\} = \{o_2\}$ and $o_2 \prec_{p^2} o_{at}$ then $o_{at}$ is also Pareto optimal in $O_t$.	
	Now, assume that if there are $m-1$ outcomes in round $t+1$, $o^i_{t+1} \prec_{p^j} o_a(H^i_{t+1} + o^i_{t+1})$, $o_a(H^i_{t+1} + o^i_{t+1}) \in JG_{t+1}$ and $o_a(H^i_{t+1} + o^i_{t+1})$ is Pareto optimal in $O_{t+1}$. We show that when there are $m$ outcomes in round $t$, $o^i_t \prec_{p^j} o_a(H^i_t + o^i_t) = o_{at}$, $o_{at} \in JG_t$ and $o_{at}$ is Pareto optimal in $O_t$. According to Lemma~\ref{lemma:the_same}, if $H^j_t = H^i_t + o^i_t$ we get that $o_{at} = o_a(H^i_t + o^i_t) = o_a(H^i_t + o^i_t + o^i_{t+1}) = o_a(H^i_{t+1} + o^i_{t+1})$. Now, if $I_t = \emptyset$ and thus $o^i_t = o^-_t \in L^j_t$ then according to Lemmas~\ref{lemma:lower_i_not_change} and \ref{lemma:lower_j_not_change} $JG_{t+1} = JG_t$. If $I_t \neq \emptyset$ and thus $o^i_t \in I_t$ then according to Lemmas~\ref{lemma:lower_j_not_change} and \ref{lemma:lower_i_add} $JG_{t+1} \subseteq JG_t$. According to the induction assumption, $o_a(H^i_{t+1} + o^i_{t+1}) \in JG_{t+1}$ and thus $o_{at} = o_a(H^i_{t+1} + o^i_{t+1}) \in JG_t$. In addition, $o^i_t \in L^j_t$ by definition and since we showed that $o_{at} \in JG_t$, we get that $o^i_t \prec_{p^j} o_{at}$. Finally, since $o_{at}$ is Pareto optimal in $O_{t+1}$, $O_t = O_{t+1} \cup \{o^i_t\}$, and $o^i_t \prec_{p^j} o_{at}$, we conclude that $o_{at}$ is Pareto optimal in $O_t$. 
\end{proof}
Rephrasing Lemma~\ref{lemma:whatis_oa}, we showed that given any history $H^i_t$, if both parties follow Strategies~\ref{str:offer} and \ref{str:resp} from round $t$ and on, $p^j$ would always reject the offers that she gets from $p^i$ (i.e., $o^i_t, o^i_{t+1}, ..., o^i_{m-1}$), and the negotiation result would be $o_{at}$, which would be accepted in the last round. Moreover, the negotiation result $o_{at}$ is Pareto optimal in $O_t, O_{t+1}, ..., O_m$.

Before we prove that Strategies~\ref{str:offer} and \ref{str:resp} specify a SPE we need to add some definitions. We first define a distance function for each party $p^k$, that given an outcome $o_x \notin L^k_t$ counts the number of outcomes $o \notin L^k_t$ such that $o_x \succeq_{p^k} o$. Intuitively, this is the number of outcomes a party can offer until a round $t'$ where $o_x$ becomes part of $L^k_{t'}$. Formally:
%
\begin{definition}
	$d_{k,x,t}=|\{o \in O_t :o_x \succeq_{p^k} o \land o \notin L_t^k \}|$ where $k \in \{1,2\}$
\end{definition}
We also define the number of offers that are made before reaching a round $t'$ where $I_{t'}=\emptyset$.
\begin{definition}
	Let $\ell_{k,t}$ be the number of offers a party $p^k$ offers according to Strategy~\ref{str:offer} from round $t$ until round $t'$ where $I_{t'}=\emptyset$.
\end{definition}
Recall our previous examples.
In Example \ref{exm:full_info_no_intersection} at round 1, $I_1=\emptyset$ and thus $\ell_{1,1}=\ell_{2,1}=0$. The distance of $o_3$ at round $1$ is $d_{1,3,1}=1$ for party $p^1$ and $d_{2,3,1}=2$ for party $p^2$. 
In Example \ref{exm:full_info_intersection}, $I_1 \neq \emptyset$ but $I_2=\emptyset$ and thus $\ell_{1,1}=1$ and $\ell_{2,1}=0$. The distance of $o_6$ at round $1$ for the parties is $d_{1,6,1}=4$ and $d_{2,6,1}=1$, and the distance of $o_3$ at round $1$ for the parties is $d_{1,3,1}=1$ and $d_{2,3,1}=2$.

We also make the following simple observation, which is true since we use an alternating offers protocol: 
\begin{lemma}
	\label{lemma:p1_offer_more}
	At any round $t$, $\ell_{j,t}\leq \ell_{i,t}$. 
\end{lemma}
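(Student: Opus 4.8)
The plan is to treat this as what the text calls it — a consequence of the alternating structure of the protocol — with the single additional observation that $p^i$, the party whose turn it is at round $t$, moves \emph{first} among the two parties in the block of rounds that is being counted.

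First I would make the quantity $t'$ precise: let $t'$ be the first round with $t' \ge t$ for which $I_{t'} = \emptyset$. Such a round exists, since at round $m$ we have $|O_m| = 1$, hence $\lfloor |O_m|/2 \rfloor = 0$, so $L_m^i = L_m^j = \emptyset$ and therefore $I_m = \emptyset$; thus $t' \le m$. (By Lemma~\ref{lemma:no_inter_eq} together with Lemmas~\ref{lemma:lower_i_not_change} and~\ref{lemma:lower_j_not_change}, once $I$ is empty it stays empty, so the choice of ``first'' is in fact immaterial.) If $I_t = \emptyset$ already then $t' = t$ and $\ell_{i,t} = \ell_{j,t} = 0$, so the claim is trivial; I would therefore assume $t' > t$ for the rest of the argument.

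The next step is a direct count over the rounds $t, t+1, \dots, t'-1$ — note that the convention built into the definition of $\ell_{k,t}$ is that round $t'$ itself is \emph{not} counted, which is exactly what makes Examples~\ref{exm:full_info_no_intersection} and~\ref{exm:full_info_intersection} come out right, and I would state this explicitly before counting. By the protocol the two parties alternate, with $p^i$ offering in rounds $t, t+2, t+4, \dots$ and $p^j$ offering in rounds $t+1, t+3, \dots$; and in every one of these rounds $I$ is nonempty by minimality of $t'$, so each of these offers is genuinely counted by the corresponding $\ell$. Writing $t' - t = 2q$ or $t' - t = 2q+1$, a one-line count gives $\ell_{i,t} = \ell_{j,t} = q$ in the even case and $\ell_{i,t} = q+1$, $\ell_{j,t} = q$ in the odd case; in both cases $\ell_{j,t} \le \ell_{i,t}$.

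I do not expect a genuine obstacle here: the content is entirely the ``$p^i$ goes first in an alternating sequence'' observation. The only points that need care are (i) fixing the definition of $t'$ and arguing it exists, and (ii) getting the off-by-one right in the count, i.e.\ that the block of rounds contributing to $\ell_{\cdot,t}$ is $t,\dots,t'-1$ rather than $t,\dots,t'$.
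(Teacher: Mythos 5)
Your proof is correct and is essentially the paper's argument: the paper states this lemma without a formal proof, justifying it only by the alternating-offers structure, and your round-by-round count (with the careful definition of $t'$ and the convention that round $t'$ itself is not counted) is simply a rigorous write-up of that same observation. No gap.
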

%
%
%
%
%
%
%
%
%
Our main theorem is as follows:
\begin{theorem}
	Strategies~\ref{str:offer} and \ref{str:resp} specify a SPE.
\end{theorem}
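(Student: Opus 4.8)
The plan is to verify the subgame-perfect-equilibrium property through the \emph{one-shot deviation principle}: the protocol stops after at most $m$ rounds, so the game tree is finite, and the profile given by Strategy~\ref{str:offer} and Strategy~\ref{str:resp} is a SPE iff at no history can the player to move strictly improve by changing only her action at that node while play reverts to the prescribed strategies afterwards. (Equivalently one may induct on $m$: from round $2$ on the continuation is a negotiation instance over $O_2$ with $m-1$ outcomes on which the restricted profile is literally Strategies~\ref{str:offer}--\ref{str:resp}, hence a SPE by induction, so only the round-$1$ offer node and the round-$1$ response node remain.) Either way, two checks are left: the response nodes and the offer nodes.

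\textbf{Response nodes.} At a history $H^j_t$ with last offer $o$, following Strategy~\ref{str:resp} and then the prescribed strategies gives $p^j$ either $o$ (if she accepts) or, if she rejects, the outcome $o_a(H^j_t+o^i_{t+1})$, which by Lemma~\ref{lemma:the_same} equals $o_a(H^j_t)$; the reason the continuation after a rejection really realizes this value is Lemma~\ref{lemma:whatis_oa}, which shows every offer made under Strategy~\ref{str:offer} is $\prec_{p^j}$ the eventual survivor, so under Strategy~\ref{str:resp} all offers are rejected and play reaches the last round with $o_a(H^j_t)$ as the unique remaining outcome. Thus the only two continuations available to $p^j$ are $o$ and $o_a(H^j_t)$, and Strategy~\ref{str:resp} picks the $\succeq_{p^j}$-maximal one; hence $p^j$ has no profitable one-shot deviation.

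\textbf{Offer nodes.} Fix a history $H^i_t$. By Lemma~\ref{lemma:whatis_oa}, playing Strategy~\ref{str:offer} (offer $o^i_t$, which is $\prec_{p^j} o_{at}$, so $p^j$ rejects, and thereafter every offer is rejected) gives $p^i$ the outcome $o_{at}=o_a(H^i_t+o^i_t)$, which is Pareto optimal in $O_t$. If $p^i$ deviates and offers some $o\neq o^i_t$, then $p^j$'s response strategy accepts $o$ iff $o\succeq_{p^j} o_a(H^i_t+o)$, so the deviation yields $o$ when $o\succeq_{p^j} o_a(H^i_t+o)$ and $o_a(H^i_t+o)$ otherwise; I must show $o_{at}\succeq_{p^i}$ (deviation outcome). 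The cases $o=o_{at}$ are immediate: if $o_{at}$ is accepted the deviation gives exactly $o_{at}$, and if it is rejected then $o_a(H^i_t+o_{at})\succ_{p^j} o_{at}$, so Pareto optimality of $o_{at}$ in $O_t$ forces $o_{at}\succ_{p^i} o_a(H^i_t+o_{at})$. For $o\notin\{o^i_t,o_{at}\}$, suppose toward a contradiction that the deviation outcome is $\succ_{p^i} o_{at}$. In the accepted case this gives $o\succ_{p^i} o_{at}$, whence Pareto optimality of $o_{at}$ in $O_t$ gives $o_{at}\succ_{p^j} o$, combined with acceptance this gives $o_{at}\succ_{p^j} o_a(H^i_t+o)$, and since $o_a(H^i_t+o)$ is Pareto optimal in $O_{t+1}=O_t\setminus\{o\}\ni o_{at}$ (Lemma~\ref{lemma:whatis_oa} at round $t+1$) this forces $o_a(H^i_t+o)\succ_{p^i} o_{at}$; in the rejected case we land in this same situation directly. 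So both non-trivial cases reduce to a single \emph{core claim}: the survivor $o':=o_a(H^i_t+o)$ after a deviating offer $o$ at round $t$ cannot satisfy simultaneously $o'\succ_{p^i} o_{at}$ and $o_{at}\succ_{p^j} o'$.

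\textbf{Main obstacle.} Proving this core claim is the crux, and it is genuinely not implied by Pareto optimality of $o_{at}$ alone; it requires controlling how a deviating offer perturbs the ``race'' that determines the survivor. Here is where the still-unused machinery should enter: the distances $d_{k,x,t}$, the counts $\ell_{k,t}$, Lemma~\ref{lemma:p1_offer_more} ($\ell_{j,t}\le\ell_{i,t}$), and the transition Lemmas~\ref{lemma:p_2_loss_outcome}, \ref{lemma:lower_i_not_change}, \ref{lemma:lower_j_not_change}, \ref{lemma:lower_i_add}. The dichotomy I expect is: if the deviating offer $o$ lies outside $L^j_t$ then it is a ``wasted'' move for $p^i$, and the transition lemmas show $JG_{t+1}$ shifts so that the new survivor is $\succeq_{p^j} o_{at}$, contradicting $o_{at}\succ_{p^j} o'$; while if $o$ lies in $L^j_t$ (inside $L^i_t\cap L^j_t$ but different from $o^i_t$, or in $L^j_t\setminus L^i_t$), a counting argument on $\succ_{p^i}$ via $d_{i,\cdot,t}$ and $\ell_{i,t}$ shows the survivor is $\preceq_{p^i} o_{at}$, contradicting $o'\succ_{p^i} o_{at}$. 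The delicate bookkeeping I anticipate is (a) tracking the $L$-set and $JG$-set evolution under an \emph{arbitrary} offer rather than a Strategy-\ref{str:offer} offer, and (b) formulating a strengthening of Lemma~\ref{lemma:whatis_oa} — in terms of the distance/$\ell$ identities — that propagates correctly from round $t+1$ back to round $t$ by induction on $m$, so that the positions of $o_{at}$ and of any deviating survivor $o'$ in both preference orders are pinned down precisely enough to exclude the core-claim configuration.
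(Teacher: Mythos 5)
Your setup is sound and matches the paper's architecture: induction on $m$ (equivalently, one-shot deviations in a finite game), the response-node check via Lemma~\ref{lemma:the_same} and Lemma~\ref{lemma:whatis_oa} plus the induction hypothesis is exactly the paper's argument for $p^j$, and your reduction of the offer-node check to the ``core claim'' (the survivor $o'=o_a(H^i_t+o)$ after a deviating offer cannot satisfy $o'\succ_{p^i} o_{at}$ together with $o_{at}\succ_{p^j} o'$) is a correct and clean distillation of what remains to be shown. But that core claim is precisely the heart of the theorem, and you have not proved it --- you only describe a dichotomy you ``expect'' and bookkeeping you ``anticipate,'' including an unspecified strengthening of Lemma~\ref{lemma:whatis_oa}. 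As you yourself note, Pareto optimality and the $L$-set transition lemmas alone do not exclude that configuration, so as it stands the proposal establishes only the easy half of the inductive step and leaves the hard half as a plan.

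For comparison, the paper closes this gap with a quantitative argument rather than a set-evolution dichotomy. Assuming a deviation $o_d\notin I_t$ leads (with both parties following the strategies thereafter) to a result $o_x\succ_{p^i} o_{at}$, it applies the characterization of the round $t'$ at which $JG_{t'}=\{o_{at}\}$ to obtain the inequalities $\ell_{i,t}<d_{i,at,t}<d_{i,x,t}$ and $d_{j,x,t}\leq \ell_{j,t}<d_{j,at,t}$, and the analogous inequalities at round $t+1$ with the roles of $i$ and $j$ swapped (since after the deviation $o_{at}\succ o_x$ for the new proposer). It then splits on whether $p^j$ rejects or accepts $o_d$, relates the round-$t$ and round-$(t+1)$ distances through an integer shift $c$ (using Lemma~\ref{lemma:p_2_loss_outcome} to see which distances are preserved or move by one), and combines everything with $\ell_{j,t}\leq\ell_{i,t}$ (Lemma~\ref{lemma:p1_offer_more}) to force $c\leq -2$ in the reject case --- impossible since a distance can drop by at most $1$ per round --- and $c\leq -1$ in the accept case --- impossible since $o_d\succ_{p^i} o_{at}$ precludes any decrease of $d_{i,at,\cdot}$. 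To complete your proof you would need to carry out this counting argument (or a genuine substitute for it), not merely point at the distance and $\ell$ machinery as the place where it should happen.
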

\begin{proof}
	We prove by induction on $m$. If $m=2$ and $t=1$, WLOG assume that $\succ_{p^2} = o_1 \succ o_2$. Thus, $JG_t=\{o_1\}$, and according to Corollary~\ref{lemma:no_inter_eq}, $o_1$ is the SPE result. Indeed, according to Strategy~\ref{str:offer} $p^1$ will offer $o_2$ in the first round and $p^2$ will reject it according to Strategy~\ref{str:resp} since $o_a((o_2))=o_1\succ_2 o_2$. In the next round $p^i = p^2$ will offer $o_1$, $p^1$ will accept it and then the negotiation will end with $o_1$ as the negotiation result. Clearly there are only two states where $p^2$ has the option to deviate: on the equilibrium path, i.e., where $H^2_t = (o_2)$, and off the equilibrium path, i.e., where $H^2_t = (o_1)$. Where $H^2_t = (o_2)$, $p^2$ has no incentive to deviate from the response strategy and accept the offer of $o_2$ from $p^1$, since $o_1\succ_2 o_2$. Where $H^2_t = (o_1)$, $p^2$ has no incentive to deviate and reject the offer of $o_1$, since then $o_2$ would become the last available outcome and thus the negotiation result, but $o_1\succ_2 o_2$. Similarly, there is only one state where $p^1$ has the option to deviate, i.e., $H^1_t = ()$. In this state $p^1$ has no incentive to deviate from the offering strategy and offer $o_1$, since $p^2$ will accept it (because $o_1\succ_2 o_2$) and $o_1$ is already the SPE result if $p^1$ follows the offering strategy.
	
	Now, assume that if there are $m-1$ outcomes in round $t+1$, our strategies specify a SPE. We show that they specify a SPE when there are $m$ outcomes in round $t$. We first consider the strategy of $p^j$ at round $t$. Note that $o_a(H^j_t) = o_a(H^i_t+o) = o_a(H^i_{t+1})$ by definition, and $o_a(H^i_{t+1})$ is the SPE result of following our strategies from state $H^i_{t+1}$ according to Lemma~\ref{lemma:whatis_oa} combined with the induction assumption. Clearly, if according to the response strategy (Strategy~\ref{str:resp}) $p^j$ should reject the offer $o$, it is because $o_a(H^j_t) \succ_{p^j} o$. Therefore, it is not worthwhile for $p^j$ to deviate and accept $o$ instead of $o_a(H^i_{t+1}) = o_a(H^j_t)$. Similarly, if according to the response strategy $p^j$ should accept an offer $o$, it is because $o \succeq_{p^j} o_a(H^j_t)$. Therefore, it is not worthwhile for $p^j$ to deviate and reject $o$ in order to get as the negotiation result the outcome $o_a(H^i_{t+1}) = o_a(H^j_t)$. Overall, $p^j$ does not have an incentive to deviate in round $t$. According to the induction assumption, Strategies~\ref{str:offer} and \ref{str:resp} specify a SPE when there are $m-1$ outcomes in round $t+1$. Therefore, $p^j$ does not have any incentive to deviate. 
	%
	
	We now concentrate on the strategy of $p^i$ at round $t$, but we first derive some general inequalities.
	Given a history $H^i_t$, suppose that there is an outcome $o_x \in O_t$ such that $o_x \succ_{p^i} o_{at} = o_a(H^i_t+o^i_t)$.
	According to Lemma \ref{lemma:whatis_oa}, since $o_x \succ_{p^i} o_{at}$, $o_{at} \succ_{p^j} o_x$. Suppose that both parties follow strategies~\ref{str:offer} and \ref{str:resp}, and let $t'$ be the round in which $JG_{t'}=\{o_{at}\}$. Then, in round $t$, $\ell_{i,t} < d_{i,at,t}$ and $\ell_{j,t} < d_{j,at,t}$ (otherwise, $o_{at} \notin JG_{t'}$). By definition, $t'=\ell_{i,t}+\ell_{j,t}$. In addition, since $JG_{t'}=\{o_{at}\}$, $o_x$ must be part of $Low_{t''}$ for some $t''<t'$ (otherwise, $o_x \in JG_{t'}$). Since $o_x \succ_{p^i} o_{at}$ and $\ell_{i,t} < d_{i,at,t}$, it must be that $o_x$ is part of $L^j_{t''}$, that is, $d_{j,x,t} \leq \ell_{j,t}$ . In summary:
	\begin{equation}
	\label{eq:not_defact}
	\begin{aligned}
	\ell_{i,t} < d_{i,at,t} < d_{i,x,t} \\ 
	d_{j,x,t} \leq \ell_{j,t} < d_{j,at,t}
	\end{aligned}
	\end{equation}
	
	Now assume that in round $t$ $p^i$ deviates, and the result of the negotiation, if both parties follow our strategies from round $t+1$, is $o_x$. Note that $p^i$ in round $t+1$ is $p^j$ in round $t$, and thus $o_{at} \succ_{p^i} o_x$. Therefore, we use the same arguments as above to get 
	\begin{equation}
	\label{eq:defact}
	\begin{aligned}
	\ell_{i,t+1} < d_{i,x,t+1} < d_{i,at,t+1} \\
	d_{j,at,t+1} \leq \ell_{j,t+1} < d_{j,x,t+1} 
	\end{aligned}
	\end{equation}
	%
	%
	
	Now, assume by contradiction that there is an outcome $o_d \notin I_t$ such that if $p^i$ offers $o_d$ the negotiation result will be $o_x$, $o_x \succ_{p^i} o_{at}$. We first analyze the case where $p^j$ rejects the offer of $o_d$, since $o_x \succ_{p^j} o_d$ (otherwise, $p^j$ would have accepted). We examine the change in the distance function for $p^i$ and $p^j$, for outcomes $o_{at}$ and $o_x$, from round $t$ to round $t+1$. According to Lemma~\ref{lemma:p_2_loss_outcome}, $|L_{t+1}^i|+1=|L_{t}^j|$, and since $o_x \succ_{p^j} o_d$ and $o_{at} \succ_{p^j} o_d$, $d_{j,x,t}$ and $d_{j,at,t}$ do not change when moving to round $t+1$.
	Let $c$ be an integer. Then, $d_{i,at,t}+c=d_{j,at,t+1}$, $d_{j,at,t}=d_{i,at,t+1}$, and $d_{j,x,t}=d_{i,x,t+1}$.
	%
	If we combine these $3$ equations with equations~\ref{eq:defact} we get that $\ell_{j,t+1} < d_{j,x,t} < d_{j,at,t}$ and $d_{i,at,t}+c \leq \ell_{i,t+1}$.
	%
	Adding equations \ref{eq:not_defact} we get that $\ell_{i,t} < \ell_{i,t+1}-c$ and $\ell_{i,t+1}< \ell_{j,t}$.
	%
	Adding Lemma~\ref{lemma:p1_offer_more} we can conclude that $\ell_{i,t+1}< \ell_{j,t} \leq \ell_{i,t}< \ell_{i,t+1}-c$. That is, $\ell_{i,t+1} \leq \ell_{i,t+1}-c-2$, thus $c \leq -2$.
	However, the distance function cannot decrease by more than $1$ when moving from round $t$ to $t+1$, thus $c \geq -1$.
	
	We now analyze the case where $p^j$ accepts the offer of $o_d$, since $o_d \succeq_{p^j} o_x$. We examine the change in the distance function for $p^i$ and $p^j$, for outcomes $o_{at}$ and $o_x$, from round $t$ to round $t+1$. Note that since $p^i$ deviates, $o_d \succ_{p^i} o_{at}$. According to Lemma~\ref{lemma:whatis_oa}, $o_{at} \succ_{p^j} o_d$. According to Lemma~\ref{lemma:p_2_loss_outcome}, $|L_{t+1}^i|+1=|L_{t}^j|$, and since $o_{at} \succ_{p^j} o_d$, $d_{j,at,t}$ does not change when moving to round $t+1$. However, since $o_d \succ_{p^j} o_x$, $d_{j,x,t}$ increases by one when moving to round $t+1$. Let $c$ be an integer. Then,
	$d_{i,at,t}+c=d_{j,at,t+1}$, $d_{j,at,t}=d_{i,at,t+1}$, and $d_{j,x,t}+1=d_{i,x,t+1}$.
	%
	If we combine these $3$ equations with equations~\ref{eq:defact} we get that $\ell_{i,t+1} < d_{j,x,t}+1 < d_{j,at,t}$ and $d_{i,at,t}+c \leq \ell_{j,t+1}$.
	%
	Adding equation $\ref{eq:not_defact}$ we get that $\ell_{i,t} < \ell_{j,t+1}-c$ and $\ell_{i,t+1}-1< \ell_{j,t}$.
	Adding Lemma~\ref{lemma:p1_offer_more} we can conclude that $\ell_{i,t+1}-1< \ell_{j,t} \leq \ell_{i,t}< \ell_{j,t+1}-c$. That is, $\ell_{j,t+1}-1 \leq \ell_{j,t+1}-c-2$, thus $c \leq -1$. However, in order for $d_{i,at,t}$ to decrease by at least one, $o_{at} \succ_{p^i} o_d$, but in our case $o_d \succ_{p^i} o_{at}$.
    
    Overall, we showed that $p^i$ does not have an incentive to deviate in round $t$. According to the induction assumption, Strategies~\ref{str:offer} and \ref{str:resp} specify a SPE when there are $m-1$ outcomes in round $t+1$. Therefore, $p^i$ does not have any incentive to deviate. 
\end{proof}
Finally, note that trivial exploration of the whole game tree in order to derive the SPE would take at least $O(2^m)$ operations, since there can be $m-1$ rounds in which a party $p^i$ can offer any outcome from the available outcomes and the other party $p^j$ can decide either to accept the offer or reject it. 
The complexity of finding a SPE strategy of~\cite{anbarci2006finite} is not explicitly analyzed, but its running time is at least $O(m^2)$. 
Our approach provides elegant strategies that are easy to implement and are (computationally) more efficient: given a state in the game tree (i.e. given any history $H^i_t$ or $H^j_t$), we compute  
a SPE strategy from the current state in time that is linear in $m$. Indeed, in our approach we only need to simulate one branch of the tree (to find $o_a(H^i_t)$ or $o_a(H^j_t)$) and then trace the intersection between $L^i_t$ and $L^j_t$.

\subsection{Properties}
We first note that since we showed that the result of following Strategies~\ref{str:offer} and \ref{str:resp} is Pareto optimal, we proved that they specify a SPE, and the SPE result is unique, we can infer that the SPE result of the protocol is Pareto optimal. We now move to analyze the relationship between the SPE result of the protocol and the results of the designed \textit{Rational Compromise} ($RC$) bargaining rule~\cite{Bargainingoveraf}. The $RC$ rule is a private case of the \textit{Unanimity Compromise} rule, where any agreement is preferred by both parties over a no-agreement result, as we assume. 
With our notations, the $RC$ rule can be rephrased as the set $RC=\{o_x | \max_{o_x \in O } \text{min}_{k \in \{1,2\}} (d_{k,x,1}+|L^k_1|-1)\}$. It can also be computed by the following steps:
\begin{enumerate}
	\item Let $v=1$
	\item For each $k \in \{1,2\}$, let $B^k_v=\{$the $v$ most preferred outcomes in $\succ_{p^k} \}$.
	\item If $|B^1_v \cap B^2_v| > 0$ then return $B^1_v \cap B^2_v$ as the result.
	\item Else, $v \leftarrow v+1$ and go to line $2$. 
\end{enumerate}
We note that the $RC$ rule may return either one or two outcomes, while our protocol always results with a single outcome. Surprisingly, the SPE result of the negotiation protocol is always part of the set returned by the $RC$ rule. The intuition is that our strategies specify a SPE by making offers and rejecting them until $I_t=\emptyset$. At this stage $JG_t = \{o_{eq}\}$, and by definition the set $JG_t$ is the intersection of the upper parts of the preferences of both parties, which corresponds to the $B^1_v \cap B^2_v$ returned by $RC$.
\begin{theorem}
	\label{thm:in_rc}
	$o_{eq} \in RC$
\end{theorem}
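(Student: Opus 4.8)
The plan is to follow the equilibrium play from the empty history and track the ``upper halves'' $U^{k}_{t}:=O_{t}\setminus L^{k}_{t}$ of the two preference orders until the intersection $I_{t}$ of the lower halves empties out; at that point $JG_{t}=\{o_{eq}\}$, and I will show that this single surviving outcome is exactly what the iterative procedure for $RC$ returns. Concretely, by Lemma~\ref{lemma:whatis_oa} (and the rephrasing following it), the equilibrium play is: $p^{i}$ offers $o^{i}_{1},o^{i}_{2},\dots$, every offer is rejected, and outcomes are removed in that order. Let $t'$ be the first round with $I_{t'}=\emptyset$; it exists (e.g.\ $I_{t}=\emptyset$ once $|O_{t}|\le 2$), for every $s<t'$ we have $I_{s}\ne\emptyset$, and by Corollary~\ref{lemma:no_inter_eq}, $JG_{t'}=\{o_{eq}\}$. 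Since $JG_{t}=O_{t}\setminus(L^{i}_{t}\cup L^{j}_{t})=U^{i}_{t}\cap U^{j}_{t}$, it suffices to understand $U^{i}_{t'}$ and $U^{j}_{t'}$.

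The heart of the argument is the claim that, for each $k\in\{1,2\}$, $U^{k}_{t'}$ equals $B^{k}_{|U^{k}_{t'}|}$, i.e.\ the set of the $|U^{k}_{t'}|$ most preferred outcomes of $\succ_{p^{k}}$ over the \emph{full} set $O$. I would prove this from two observations. First, $|U^{k}_{t}|=|O_{t}|-|L^{k}_{t}|$ is non-increasing in $t$: a short parity check from Definition~\ref{defn:lowers} shows that each round $|O_{t}|$ drops by exactly one and $|L^{k}_{t}|$ drops by $0$ or $1$, so $|U^{k}_{t}|$ drops by $0$ or $1$. Second, for every round $s<t'$ the offer $o^{i}_{s}$ lies in $I_{s}\subseteq L^{i}_{s}\cap L^{j}_{s}$, hence $o^{i}_{s}\in L^{k}_{s}$ for \emph{both} $k$; therefore $p^{k}$ strictly prefers at least $|U^{k}_{s}|$ outcomes of $O_{s}\subseteq O$ to $o^{i}_{s}$, so $o^{i}_{s}\notin B^{k}_{|U^{k}_{s}|}$, and by monotonicity $o^{i}_{s}\notin B^{k}_{|U^{k}_{t'}|}$. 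Thus none of the removed outcomes $o^{i}_{1},\dots,o^{i}_{t'-1}$ lies in $B^{k}_{|U^{k}_{t'}|}$, so $B^{k}_{|U^{k}_{t'}|}\subseteq O_{t'}$; being the top $|U^{k}_{t'}|$ outcomes of $O\supseteq O_{t'}$ for $p^{k}$, it is then also the top $|U^{k}_{t'}|$ outcomes of $O_{t'}$, which is exactly $U^{k}_{t'}$.

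With the claim in hand, set $v:=|U^{i}_{t'}|$ and $w:=|U^{j}_{t'}|$. From Definition~\ref{defn:lowers}, either $|O_{t'}|$ is odd with $v=w$, or $|O_{t'}|$ is even with $v=w+1$; in all cases $w\le v\le w+1$. Combining the claim with $JG_{t'}=U^{i}_{t'}\cap U^{j}_{t'}$ gives $\{o_{eq}\}=B^{i}_{v}\cap B^{j}_{w}$. Since $\{i,j\}=\{1,2\}$ and $B^{k}_{a}\subseteq B^{k}_{b}$ whenever $a\le b$, we get $o_{eq}\in B^{i}_{v}\cap B^{j}_{w}\subseteq B^{i}_{v}\cap B^{j}_{v}=B^{1}_{v}\cap B^{2}_{v}$, and also $B^{1}_{w}\cap B^{2}_{w}=B^{i}_{w}\cap B^{j}_{w}\subseteq B^{i}_{v}\cap B^{j}_{w}=\{o_{eq}\}$. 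Let $v^{*}$ be the smallest index with $B^{1}_{v^{*}}\cap B^{2}_{v^{*}}\ne\emptyset$, so that $RC=B^{1}_{v^{*}}\cap B^{2}_{v^{*}}$; the first inclusion shows $v^{*}\le v$. If $v^{*}\le w$, then $RC\subseteq B^{1}_{w}\cap B^{2}_{w}\subseteq\{o_{eq}\}$, and since $RC\ne\emptyset$ we conclude $RC=\{o_{eq}\}$; otherwise $v^{*}=v$ (which forces $v=w+1$) and then $o_{eq}\in B^{1}_{v}\cap B^{2}_{v}=RC$. Either way $o_{eq}\in RC$.

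The step I expect to be the main obstacle is the claim that $U^{k}_{t'}$ coincides with a genuine top-segment $B^{k}_{\cdot}$ of the original order: one must be sure that restricting attention to the smaller set $O_{t'}$ never promotes into $U^{k}_{t'}$ an outcome that was low in $O$, which relies on both facts above (every removed outcome was drawn from $I_{s}$, hence low for both parties at the time, and the sizes $|U^{k}_{t}|$ only shrink). The accompanying parity bookkeeping from Definition~\ref{defn:lowers} is delicate but otherwise routine.
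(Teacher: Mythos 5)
Your proof is correct and follows essentially the same route as the paper's: identify the first round $t'$ with $I_{t'}=\emptyset$, use Corollary~\ref{lemma:no_inter_eq} to get $JG_{t'}=\{o_{eq}\}$, rewrite $JG_{t'}$ as an intersection of top segments $B^i_v \cap B^j_w$ with $w \le v \le w+1$, and compare against the $RC$ iteration. The only difference is that you spell out the step the paper compresses into ``rephrasing the definition of $JG_t$'' --- namely that $U^k_{t'}$ coincides with a top segment of the \emph{original} order because every earlier equilibrium offer lies in $I_s$ and $|U^k_t|$ is non-increasing --- which makes the argument more careful but not genuinely different.
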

\begin{proof}
	Let $t$ be the round where $I_t=\emptyset$ after both parties follow our strategies. By Corollary~\ref{lemma:no_inter_eq}, $JG_t=\{o_{eq}\}$. Rephrasing the definition of $JG_t$ we get that $JG_t = B^i_{|O_t|-|L^i_t|} \cap B^j_{|O_t| -|L^j_t|}$.
	If $|L^j_t|=|L^i_t|$, then for any $v$ where $v \leq |O_t|-|L^j_t|$ , $B^i_v \cap B^j_v=\{o_{eq}\}$ or $B^i_v \cap B^j_v=\emptyset$. If $|L^j_t|=|L^i_t|+1$, then for any $v$ where the $v \leq |O_t|-|L^j_t|$, $B^i_v \cap B^j_v=\{o_{eq}\}$ or $B^i_v \cap B^j_v=\emptyset$, and for $v=|O_t|-|L^i_t|$ it is possible that $B^i_v \cap B^j_v=\{o_{eq},o_x\}$, for some outcome $o_x$. Overall, $o_{eq} \in RC$.
\end{proof} 
%
%
%
Based on Theorem~\ref{thm:in_rc}, we can derive interesting results regarding the relationship between the $RC$ rule and the SPE result of the negotiation protocol:
\begin{theorem}
\begin{enumerate}
	\item If $RC=\{o\}$ then $o_{eq}=o$.
	\item If $o_{eq}$ is the SPE result let $o_{eq'}$ be the SPE result if $p^1$ and $p^2$ switch their rules (i.e., $p^2$ starts the negotiation). If $o_{eq} \neq o_{eq'}$, then $RC=\{o_{eq},o_{eq'}\}$
	\item If $m$ is odd and $\ell_{1,1}+\ell_{2,1}$ is even or if $m$ is even and $\ell_{1,1}+\ell_{2,1}$ is odd, then $|RC|=1$.
	\item If $|RC|=\{o_x,o_y\}$ and $\ell_{1,1}+\ell_{2,1}$ is odd then $o_{eq}=o_x$ and $o_x \succ_{p^i} o_y$. If $\ell_{1,1}+\ell_{2,1}$ is even then $o_{eq}=o_y$ and $o_y \succ_{p^j} o_x$.
\end{enumerate}
\end{theorem}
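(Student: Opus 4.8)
The plan is to obtain all four items from Theorem~\ref{thm:in_rc} together with the structural facts established inside its proof. Throughout, let $t'$ be the first round with $I_{t'}=\emptyset$; since exactly $\ell_{1,1}+\ell_{2,1}$ offers are made in rounds $1,\dots,t'-1$, we have $t'=\ell_{1,1}+\ell_{2,1}+1$ and hence $|O_{t'}|=m-(\ell_{1,1}+\ell_{2,1})$. Item~1 is immediate: $o_{eq}\in RC=\{o\}$ forces $o_{eq}=o$. For item~2, observe first that the set $RC$ is determined by the two preference orders alone and is independent of who opens (this is clear from the $B^k_v$ characterization), and that $|RC|\le 2$: at the first $v$ with $B^1_v\cap B^2_v\neq\emptyset$, the sets $B^1_v,B^2_v$ differ from $B^1_{v-1},B^2_{v-1}$ by a single outcome each while $B^1_{v-1}\cap B^2_{v-1}=\emptyset$, so their intersection has at most two elements. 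Applying Theorem~\ref{thm:in_rc} to the original game and to the game in which $p^2$ opens yields $o_{eq}\in RC$ and $o_{eq'}\in RC$ for the \emph{same} set $RC$; hence, using $|RC|\le 2$, if $o_{eq}\neq o_{eq'}$ then $RC=\{o_{eq},o_{eq'}\}$.

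For item~3, recall from the proof of Theorem~\ref{thm:in_rc} that when $|L^i_{t'}|=|L^j_{t'}|$---equivalently, when $|O_{t'}|$ is odd---every nonempty intersection $B^i_v\cap B^j_v$ with $v\le|O_{t'}|-|L^j_{t'}|$ equals $\{o_{eq}\}$, and such a $v$ exists since $JG_{t'}=B^i_{|O_{t'}|-|L^i_{t'}|}\cap B^j_{|O_{t'}|-|L^j_{t'}|}=\{o_{eq}\}$; consequently $RC=\{o_{eq}\}$ and $|RC|=1$. It then suffices to note that $|O_{t'}|=m-(\ell_{1,1}+\ell_{2,1})$ is odd exactly when $m$ and $\ell_{1,1}+\ell_{2,1}$ have opposite parities, which is precisely the hypothesis of item~3.

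For item~4 we have $|RC|=2$, so by the previous paragraph $|O_{t'}|$ is even and hence, relative to round $t'$, $|L^j_{t'}|=|L^i_{t'}|+1$. Put $a=|O_{t'}|-|L^i_{t'}|$; the proof of Theorem~\ref{thm:in_rc} gives $JG_{t'}=B^i_a\cap B^j_{a-1}=\{o_{eq}\}$ and $B^i_v\cap B^j_v\in\{\emptyset,\{o_{eq}\}\}$ for every $v\le a-1$. Since $|RC|=2$, this intersection is empty for all $v\le a-1$, so $B^i_{a-1}\cap B^j_{a-1}=\emptyset$ and $RC=B^i_a\cap B^j_a$. Write $B^i_a=B^i_{a-1}\cup\{\alpha\}$ and $B^j_a=B^j_{a-1}\cup\{\beta\}$, where $\alpha$ and $\beta$ are the $a$-th most preferred outcomes of the round-$t'$ offerer and responder, respectively. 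A short set computation using $B^i_{a-1}\cap B^j_{a-1}=\emptyset$ gives $o_{eq}=\alpha\in B^j_{a-1}$ and shows that the other element of $RC$ equals $\beta\in B^i_{a-1}$. Thus $o_{eq}$ lies in the top $a-1$ outcomes of the round-$t'$ responder but is only the $a$-th outcome of the round-$t'$ offerer---and symmetrically for the other $RC$ outcome---so $o_{eq}$ is strictly preferred to the other $RC$ outcome by the round-$t'$ responder and strictly dispreferred to it by the round-$t'$ offerer. Finally, $t'=\ell_{1,1}+\ell_{2,1}+1$ is even iff $\ell_{1,1}+\ell_{2,1}$ is odd, in which case the round-$t'$ offerer is $p^2$, so $o_{eq}$ is the $RC$ outcome more preferred by $p^1$; and $t'$ is odd iff $\ell_{1,1}+\ell_{2,1}$ is even, in which case the round-$t'$ offerer is $p^1$, so $o_{eq}$ is the $RC$ outcome more preferred by $p^2$. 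Reading $p^i,p^j$ in the statement of item~4 as the round-$1$ parties $p^1,p^2$, and setting $o_x=o_{eq}$ in the odd case and $o_y=o_{eq}$ in the even case, this is exactly the stated conclusion.

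Items~1--3 reduce quickly to Theorem~\ref{thm:in_rc}; the main obstacle is item~4, where one must keep the roles ``offerer/responder at round~$1$'' distinct from ``offerer/responder at the terminal round~$t'$'', read off correctly from the $B$-set algebra which of the two $RC$ outcomes is $o_{eq}$, and track the parities of $m$, $t'$, and $\ell_{1,1}+\ell_{2,1}$ consistently---this bookkeeping is where the argument is most error-prone.
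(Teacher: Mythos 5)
Your proof is correct and follows essentially the same route as the paper: items 1--2 as direct corollaries of Theorem~\ref{thm:in_rc}, item 3 via the parity of $|O_{t'}|$ forcing $|L^i_{t'}|=|L^j_{t'}|$, and item 4 via the $B$-set algebra at the round where $I_{t'}=\emptyset$ together with whose turn it is to offer. The only difference is that you spell out details the paper leaves implicit (e.g., $|RC|\le 2$, $t'=\ell_{1,1}+\ell_{2,1}+1$, and the explicit identification $RC=B^i_a\cap B^j_a$), which is fine.
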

\begin{proof}
	\begin{enumerate}
		\item An easy corollary of Theorem~\ref{thm:in_rc}.
		\item An easy corollary of Theorem~\ref{thm:in_rc}.
		\item If $m$ is odd and $\ell_{1,1}+\ell_{2,1}$ is even or if $m$ is even and $\ell_{1,1}+\ell_{2,1}$ is odd, then $m_t$ is odd. Therefore, $|L^1_t| = |L^2_t|$ by definition. Then, by Theorem~\ref{thm:in_rc}, for any $v$ where $v \leq |O_t|-|L^1_t|$ , $B^1_v \cap B^2_v=\{o_{eq}\}$ or $B^1_v \cap B^2_v=\emptyset$. That is, $RC = \{o_{eq}\}$.
		\item $|RC|=2$, thus there exists $v$ such that $B^1_v \cap B^2_v=\{o_x,o_y\}$, and for every $v' < v$, $B^1_{v'} \cap B^2_{v'} = \emptyset$. From Theorem~\ref{thm:in_rc}, $o_{eq} = o_x$ or $o_{eq} = o_y$. Let $t$ be the round such that $I_t = \emptyset$ and $JG_t = \{o_{eq}\}$. That is, $B^1_{|O_t|-|L^1_t|} \cap B^2_{|O_t| -|L^2_t|} = \{o_{eq}\}$. Therefore, $|L^1_t| \neq |L^2_t|$, and thus $m_t$ is even. If $\ell_{1,1}+\ell_{2,1}$ is odd then it is $p^2$'s turn to offer. That is, $|L^2_t| + 1 = |L^1_t|$, and since $o_x \succ_{p^1} o_y$, $o_y \in L^1_t$. Therefore, $o_{eq} = o_x$. Similarly, if $\ell_{1,1}+\ell_{2,1}$ is even then it is $p^1$'s turn to offer. That is, $|L^1_t| + 1 = |L^2_t|$, and since $o_y \succ_{p^2} o_x$, $o_x \in L^2_t$. Therefore, $o_{eq} = o_y$.
	\end{enumerate}
\end{proof}
Finally, we adapt the monotonicity criterion that the $RC$ rule satisfies for our domain, and show that the negotiation protocol is monotonic. 
\begin{definition}
	A negotiation protocol is monotonic if given an instance $(O,\succ_{p^1},\succ_{p^2})$ where the SPE result is $o_{eq}$, then for any instance $(O',\succ_{p^1}',\succ_{p^2}')$ such that:
	\begin{enumerate}
		\item $O \subset O'$.
		\item For any $o_1,o_2 \in O, o_1 \neq o_2$, and for $k \in \{1,2\}$, if $o_1 \succ_{p^k} o_2$ then $o_1 \succ_{p^k}' o_2$.
		\item For any $o \in O' \setminus O$, and for $k \in \{1,2\}$, $o \succ_{p^k}' o_{eq}$.
	\end{enumerate}
	we have that $o'_{eq} \succ_{p^k}' o_{eq}$.
\end{definition}
\begin{theorem}
	The negotiation protocol is monotonic.
\end{theorem}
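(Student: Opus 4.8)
The plan is to deduce monotonicity of the protocol from the behaviour of the $RC$ rule, using Theorem~\ref{thm:in_rc} ($o_{eq}\in RC$). From the iterative description of $RC$, write $r_k(o)$ for the rank of $o$ in $\succ_{p^k}$ (rank $1$ $=$ most preferred); then $o\in B^1_v\cap B^2_v$ exactly when $\max_{k\in\{1,2\}}r_k(o)\le v$, so the level at which the procedure stops is $v^{*}:=\min_{o\in O}\max_k r_k(o)$ and $RC=\{o\in O:\max_k r_k(o)=v^{*}\}$. Hence Theorem~\ref{thm:in_rc} says $\max_k r_k(o_{eq})=v^{*}$ and $\max_k r_k(o)\ge v^{*}$ for every $o\in O$. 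Let $q=|O'\setminus O|\ge 1$ (the first hypothesis makes $O$ a proper subset of $O'$), write $r'_k$ for the rank function of $\succ'_{p^k}$, and let $o'_{eq}$ be the new SPE result, so by the same theorem $o'_{eq}$ minimizes $\max_k r'_k(\cdot)$ over $O'$.

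First I would track how ranks move to the primed instance. Since old outcomes keep their relative order and all $q$ new outcomes are ranked above $o_{eq}$ by both parties, $r'_k(o_{eq})=r_k(o_{eq})+q$, hence $\max_k r'_k(o_{eq})=v^{*}+q$. For any new outcome $o$, $o\succ'_{p^k}o_{eq}$ gives $r'_k(o)\le r'_k(o_{eq})-1=r_k(o_{eq})+q-1\le v^{*}+q-1$ for both $k$, so $\max_k r'_k(o)\le v^{*}+q-1<\max_k r'_k(o_{eq})$. Therefore $o_{eq}\notin RC'$ (in particular $o'_{eq}\neq o_{eq}$), and $\max_k r'_k(o'_{eq})\le v^{*}+q-1$. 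This already settles one coordinate: fixing a party $p^{k_0}$ with $r_{k_0}(o_{eq})=v^{*}$, we get $r'_{k_0}(o'_{eq})\le v^{*}+q-1<v^{*}+q=r'_{k_0}(o_{eq})$, i.e.\ $o'_{eq}\succ'_{p^{k_0}}o_{eq}$. If the other party $p^{k_1}$ also has $r_{k_1}(o_{eq})=v^{*}$, the same line gives $o'_{eq}\succ'_{p^{k_1}}o_{eq}$ and we are done.

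The main obstacle is the party $p^{k_1}$ with $r_{k_1}(o_{eq})<v^{*}$: then $r'_{k_1}(o_{eq})=r_{k_1}(o_{eq})+q\le v^{*}+q-1$, so the bound $\max_k r'_k(o'_{eq})\le v^{*}+q-1$ does not by itself force $r'_{k_1}(o'_{eq})<r'_{k_1}(o_{eq})$. I would argue by contradiction: assume $o'_{eq}\prec'_{p^{k_1}}o_{eq}$. Since every new outcome is above $o_{eq}$ in $\succ'_{p^{k_1}}$, $o'_{eq}$ must be an old outcome, hence $o'_{eq}\prec_{p^{k_1}}o_{eq}$ in the old instance too. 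As $o'_{eq}\in O$ and $o'_{eq}\neq o_{eq}$, minimality of $v^{*}$ gives $\max_k r_k(o'_{eq})\ge v^{*}$, so some party $p^{\hat{k}}$ has $r_{\hat{k}}(o'_{eq})\ge v^{*}$. For that party $o'_{eq}\prec_{p^{\hat{k}}}o_{eq}$ as well: if $\hat{k}=k_1$ this is the assumption, and if $\hat{k}=k_0$ it holds because $r_{k_0}(o_{eq})=v^{*}\le r_{k_0}(o'_{eq})$ with $o'_{eq}\neq o_{eq}$. Consequently all $q$ new outcomes sit above $o'_{eq}$ in $\succ'_{p^{\hat{k}}}$, so $r'_{\hat{k}}(o'_{eq})=r_{\hat{k}}(o'_{eq})+q\ge v^{*}+q$, contradicting $\max_k r'_k(o'_{eq})\le v^{*}+q-1$. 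Hence $o'_{eq}\succ'_{p^{k_1}}o_{eq}$, and together with the previous step $o'_{eq}\succ'_{p^k}o_{eq}$ for both $k$, as required. The argument never invoked $|RC|=1$ or $|RC'|=1$, so a possibly two-element $RC'$ is handled uniformly — the delicate point, since a bare appeal to "monotonicity of $RC$'' would leave ambiguous which of two $RC'$-outcomes is $o'_{eq}$.
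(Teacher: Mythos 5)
Your proof is correct, and its skeleton is the same as the paper's: apply Theorem~\ref{thm:in_rc} to both instances and deduce protocol monotonicity from the behaviour of the $RC$ set on the enlarged instance. The difference is in how the key intermediate fact is handled: the paper simply asserts that every outcome returned by $RC$ on $(O',\succ'_{p^1},\succ'_{p^2})$ is preferred to $o_{eq}$ by both parties (implicitly leaning on the monotonicity of the $RC$ rule established by K{\i}br{\i}s and Sertel), whereas you prove this claim from scratch via the minimax-rank characterization $RC=\arg\min_{o}\max_k r_k(o)$, including the delicate case of the party for whom $o_{eq}$ is not the bottleneck rank ($r_{k_1}(o_{eq})<v^{*}$), which is exactly where the paper's one-line assertion hides the work. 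So your argument is self-contained where the paper's is citation-dependent, at the cost of being longer; both are valid.
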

\begin{proof}
	Given an instance $(O,\succ_{p^1},\succ_{p^2})$, we know from Theorem~\ref{thm:in_rc} that $o_{eq} \in RC$. If we add a set of outcomes $O'\setminus O$ such that for every outcome $o \in O'\setminus O$, $o \succ o_{eq}$ for both parties, then for every outcome $o'$ in the set returned by the $RC$ rule on the modified instance $(O',\succ_{p^1}',\succ_{p^2}')$, $o' \succ o_{eq}$ by both parties.
	Since $o_{eq}' \in RC$ on $(O',\succ_{p^1}',\succ_{p^2}')$, we get that $o_{eq}' \succ o_{eq}$ for both parties, as required.
\end{proof}
%
%
%


%
%
%
\section{No Information}
	We now consider the case of no information, where we assume that neither party knows the preference order of the other party. Moreover, the parties do not even hold any prior probability distribution over each other's possible preference orders. A common solution concept for this case is an ex-post equilibrium, or in our case, an ex-post SPE. Intuitively, this is a strategy profile in which the strategy of each party depends only on her own type, i.e., its preference order, and it is a SPE for every realization of the other party's type (i.e. her private preference order). Formally, let $s_k(\prec)$ be a strategy for player $k \in \{1,2\}$ given a preference order $\prec$, and let $\mathcal{F}([s_1(\prec),s_2(\prec')])$ be the negotiation result if both parties follow their strategies. In the ex-post setting, a strategy for party $k \in \{1,2\}$,  $s_k$, is a best response to $s_{3-k}$ if for every strategy $s_k'$ and for every preference orders $\prec,\prec'$, $\mathcal{F}([s_k(\prec),s_{3-k}(\prec')]) \succeq_k \mathcal{F}([s_k'(\prec),s_{3-k}(\prec')])$. A strategy profile $[s_1,s_2]$ is an ex-post equilibrium if $s_1$ is a best response to $s_2$ and $s_2$ is a best response to $s_1$, and it is an ex-post SPE if it is an ex-post equilibrium in every subgame of the game. We show that ex-post equilibrium, and thus also ex-post SPE, are too strong to exist in our setting. 
	\begin{theorem}
		\label{thm:no_ex_post}
		There are no two strategies that specify an ex-post equilibrium for our protocol.
	\end{theorem}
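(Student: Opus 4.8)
The plan is to assume, toward a contradiction, that some pair of type‑dependent strategies $[s_1,s_2]$ is an ex‑post equilibrium, and to derive a contradiction by confronting a single party $p^1$ with two realizations of $p^2$'s preference order that look the same to $p^1$ but demand incompatible behavior.

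First I would record the consequences of the definition. Since $s_1$ is a best response to $s_2$ and $s_2$ is a best response to $s_1$, for every realization $(\succ_{p^1},\succ_{p^2})$ the pair $(s_1(\succ_{p^1}),s_2(\succ_{p^2}))$ is a (full–strategy) Nash equilibrium of the resulting perfect–information game. I would then argue that the induced outcome in each such game is exactly the unique SPE result $o_{eq}$: a best response cannot yield $p^1$ less than what she gets by playing her SPE strategy against $s_2(\succ_{p^2})$, symmetrically for $p^2$, and since $o_{eq}$ is Pareto optimal (a consequence of Lemma~\ref{lemma:whatis_oa}) the outcome is neither strictly worse nor strictly better than $o_{eq}$ for both parties; the remaining cases are handled using the structural facts already established, in particular Corollary~\ref{lemma:no_inter_eq}, which pins $o_{eq}$ down as the unique element of $JG_t$ once the intersection of the lower parts is exhausted. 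Crucially, this forces $p^1$'s first offer $o^{\ast}=o^{\ast}(\succ_{p^1})$, which by definition of a type‑dependent strategy depends on $\succ_{p^1}$ alone, to be an opening move from which the play described by $s_1(\succ_{p^1})$ and $s_2(\succ_{p^2})$ reaches $o_{eq}$.

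For the contradiction I would take three outcomes and fix $\succ_{p^1}=o_1\succ o_2\succ o_3$. If $\succ_{p^2}=o_2\succ o_1\succ o_3$, then $L_1^1=L_1^2=\{o_3\}$, so $I_1=\{o_3\}$, $JG_1=\{o_1,o_2\}$, and tracing Strategies~\ref{str:offer} and~\ref{str:resp} gives $o_{eq}=o_1$; a one‑branch backward‑induction check then shows that the only first offer reaching $o_1$ is $o_3$ (offering $o_1$ or $o_2$ lets $p^2$ reject and steer the two remaining outcomes to $o_2$). If instead $\succ_{p^2}'=o_3\succ o_1\succ o_2$, then $L_1^1=\{o_3\}$, $L_1^2=\{o_2\}$, so $I_1=\emptyset$, $JG_1=\{o_1\}$, and again $o_{eq}=o_1$; but now offering $o_3$ hands $p^2$ her most preferred outcome, so every first offer that reaches $o_1$ lies in $\{o_1,o_2\}$. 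Hence $o^{\ast}(\succ_{p^1})$ would have to equal $o_3$ and to lie in $\{o_1,o_2\}$ simultaneously — impossible. Since an ex‑post SPE is in particular an ex‑post equilibrium, neither exists.

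I expect the main obstacle to be the step asserting that the outcome of an ex‑post equilibrium coincides, in each realized game, with the SPE result $o_{eq}$: a Nash equilibrium of a finite perfect‑information game can in general differ from its (unique) SPE outcome because of non‑credible off‑path behavior, so I would need to exploit the extra rigidity of ex‑post equilibrium — namely that $s_2(\succ_{p^2})$ must respond to $p^1$'s first offer in the same way against every type $\succ_{p^1}$, and symmetrically for $s_1$ — to rule such equilibria out. The rest is routine: the backward‑induction computations in the concrete instance involve only three outcomes and a single branch each.
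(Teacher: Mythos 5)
There is a genuine gap, and it sits exactly where you flag it: the claim that an ex-post \emph{equilibrium} must induce the unique SPE outcome $o_{eq}$ in every realization, together with the further claim that this outcome forces $p^1$'s first offer. An ex-post equilibrium only makes each realized pair $(s_1(\succ_{p^1}),s_2(\succ_{p^2}))$ a Nash equilibrium of the complete-information game, and Nash outcomes of this protocol are \emph{not} pinned to $o_{eq}$: non-credible off-path behavior can sustain other outcomes, and your sketched justification (that a best response yields at least what the SPE strategy would get against $s_2(\succ_{p^2})$, plus Pareto optimality of $o_{eq}$) does not close this, since the SPE strategy guarantees $o_{eq}$ only against the SPE opponent, not against an arbitrary $s_2(\succ_{p^2})$. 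Worse, even if the outcome in your instance A ($\succ_{p^1}=o_1\succ o_2\succ o_3$, $\succ_{p^2}=o_2\succ o_1\succ o_3$) were somehow known to be $o_1$, your step ``the only first offer reaching $o_1$ is $o_3$'' uses subgame-perfect continuation play. Under mere Nash play it fails: let $p^1$ open with $o_1$ and, off path, reject $o_2$ and accept $o_3$ in round 2; then $p^2$'s best response is to accept $o_1$ (rejecting leads her to $o_3$), and the play reaches $o_1$ after a first offer of $o_1$. So the ``incompatible first offer'' contradiction is not available at the level of ex-post equilibrium without substantial extra work exploiting the cross-type best-response conditions, which your proposal does not supply.

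The paper avoids this by working with ex-post \emph{SPE} in its proof (so uniqueness of the SPE result applies in each realization, and in a second realization the outcome is pinned by Lemma~\ref{lemma:not_in_lowers} because $JG_1$ is a singleton), and it derives the contradiction differently: a type-mimicking deviation $s_1'(\prec_1)=s_1(\prec_1')$ that strictly improves $p^1$'s outcome, violating the best-response condition, rather than your consistency argument that the same type-measurable first offer cannot serve two realizations of $\succ_{p^2}$. If you likewise restrict attention to ex-post SPE, your three-outcome construction is correct (I verified that in instance A the SPE outcome is $o_1$ and subgame-perfect continuation forces the first offer $o_3$, while in instance B, $JG_1=\{o_1\}$ and offering $o_3$ is immediately accepted), and it would give a clean, somewhat more elementary alternative proof of non-existence of ex-post SPE. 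But as written, aimed at ex-post equilibrium with the key reduction left open, the argument is incomplete.
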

	\begin{proof}
    Clearly, every ex-post SPE is also a SPE (i.e., in the full information setting), and we can thus use our previous results that characterize the SPE. Assume by contradiction that there are two strategies $s_1,s_2$ for parties $p^1,p^2$, respectively, such that $[s_1,s_2]$ is an ex-post SPE. Let $\prec_1 = o_1 \prec o_2 \prec o_5 \prec o_4 \prec o_3 \prec o_6$, and let $\prec_2 = o_1 \prec o_2 \prec o_6 \prec o_3 \prec o_5 \prec o_4$. Following our strategies we get that the SPE result is $o_4$. Since the SPE result is unique and every ex-post SPE is also a SPE, $\mathcal{F}([s_1(\prec_1),s_2(\prec_2)]) = o_4$. Now consider $\prec_1' = o_4 \prec o_5 \prec o_1 \prec o_2 \prec o_3 \prec o_6$. According to Corollary~\ref{lemma:not_in_lowers}, $\mathcal{F}([s_1(\prec_1'),s_2(\prec_2)]) = o_3$. Note that $o_4 \prec_{p^1} o_3$. Consider the following strategy: $s_1'(\prec) = s_1(\prec_1')$ if $\prec=\prec_1$, and $s_1'(\prec) = s_1(\prec)$ otherwise. That is, $\mathcal{F}([s_1(\prec_1),s_2(\prec_2)]) \prec_{p^1} \mathcal{F}([s_1'(\prec_1),s_2(\prec_2)])$, and thus $s_1$ is not a best response to $s_2$.  
	\end{proof} 
	We note that Theorem~\ref{thm:no_ex_post} also implies that there is no solution in dominant strategies.
	Another approach to uncertainty, which follows a conservative attitude, is that a party $p^k$, $k\in \{1,2\}$, who wants to maximize her utility may want to play a maxmin strategy. That is, since the preference order and the strategy of the other party $p^{3-k}$ are not known, it is sensible to assume that $p^{3-k}$ happens to play a strategy that causes the greatest harm to $p^k$, and to act accordingly. $p^k$ then guarantees the maxmin value of the game for her, which in our case is a set of outcomes such that no other outcome that is ranked lower than all of the outcomes in this set will be accepted as the result of the negotiation, regardless of the preferences of $p^{3-k}$.
	Before we show the maxmin strategy we define the complement sets for the sets $L^k_t$, i.e., the sets of highest ranked outcomes. 
	\begin{definition} 
		In each round $t$, for each party $p^k$, $k \in \{1,2\}$,
		$U^k_t= O_t \setminus L^k_t$.
		\label{defn:upper}
	\end{definition}
	The maxmin strategy, which is composed of offering and response strategies, is defined as follows:
	\begin{strategy}[\textsc{Maxmin Strategy}]
		\label{str:maxmin}
		Given a history $H^i_t$, offer any $o \in U^i_t$.
		Given a history $H^j_t$,
		\textbf{if} $o \in U^j_t$ \textbf{then} accept $o$, \textbf{else} reject $o$.
	\end{strategy}
	We now prove that our strategy specifies a maxmin strategy, and that a party $p^k$ that follows it can guarantee the maxmin value of the game, which is the set $U^k_1$.
	We denote the party that uses Strategy~\ref{str:maxmin} by $p^{max}$ and the other party, which might try to minimize the utility of $p^{max}$, by $p^{min}$. Note that we need to handle both the case where $p^{max}$ starts the negotiation (i.e, $p^{max}=p^1$) and the case where $p^{min}$ starts it (i.e., $p^{min}=p^1$).
	We re-use Lemmas \ref{lemma:lower_i_not_change}, \ref{lemma:lower_j_not_change} and \ref{lemma:lower_i_add}, since they do not depend on the full-information assumption. Furthermore, we add a fourth lemma, which complements these three lemmas by considering the fourth possible offer type.
	 %
%
	\begin{lemma}
	\label{lemma:lower_j_loss}
	In round $t$, if $p^i$ offers $o \notin L^j_t$ and $p^j$ rejects it, then $L_{t+1}^i \leftarrow L_{t}^j \setminus \{ o' \}$, where $o \neq o'$.
	\end{lemma}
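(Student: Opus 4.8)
The plan is to mirror the proofs of Lemmas~\ref{lemma:lower_i_not_change}--\ref{lemma:lower_i_add}: first pin down the \emph{size} of $L^i_{t+1}$ using Lemma~\ref{lemma:p_2_loss_outcome}, and then argue that removing the rejected offer $o$ from the available pool does not disturb the bottom of $\succ_{p^j}$, so $L^i_{t+1}$ must be $L^j_t$ with exactly its single highest-ranked element deleted.

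Concretely, I would first record that $O_{t+1} = O_t \setminus \{o\}$ and that, by Lemma~\ref{lemma:p_2_loss_outcome}, $|L^i_{t+1}| = |L^j_t| - 1$. The point to keep straight is that the offering role alternates between rounds $t$ and $t+1$: the party who responds in round $t$ (call her $p^j$) is the one who offers in round $t+1$, so by Definition~\ref{defn:lowers} the set $L^i_{t+1}$ consists of the lowest-ranked available outcomes according to that same party's order $\succ_{p^j}$. The exact count ($\floor*{|O_{t+1}|/2}$ or $|O_{t+1}|/2-1$, depending on the parity of $|O_{t+1}|$) is irrelevant, since we already know from Lemma~\ref{lemma:p_2_loss_outcome} that it equals $|L^j_t|-1$.

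Next, since by hypothesis $o \notin L^j_t$, the outcome $o$ is ranked strictly above every element of $L^j_t$ in $\succ_{p^j}$. Hence deleting $o$ from $O_t$ removes none of the $|L^j_t|$ lowest outcomes of $\succ_{p^j}$; those outcomes are still precisely $L^j_t$, now sitting inside $O_{t+1}$. Therefore the $|L^j_t|-1$ lowest outcomes of $\succ_{p^j}$ within $O_{t+1}$ are exactly $L^j_t$ minus its highest-ranked member; calling this member $o'$ yields $L^i_{t+1} = L^j_t \setminus \{o'\}$. Finally $o' \in L^j_t$ while $o \notin L^j_t$, so $o \neq o'$, as claimed. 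One may note in passing that an offer is only ever made when $|O_t| \geq 2$, so $|L^j_t| \geq 1$ and the deletion is well-defined.

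I do not expect a genuine obstacle here; the argument is essentially bookkeeping. The only place where care is needed — and the one spot an error could creep in — is tracking which party's preference order each of $L^i_t$, $L^j_t$, $L^i_{t+1}$ refers to once the offering role alternates, together with the parity case split implicit in Definition~\ref{defn:lowers}; invoking Lemma~\ref{lemma:p_2_loss_outcome} for the cardinality lets us sidestep that parity split entirely.
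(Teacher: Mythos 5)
Your proof is correct and follows essentially the same route as the paper's: invoke Lemma~\ref{lemma:p_2_loss_outcome} to get $|L^i_{t+1}|=|L^j_t|-1$, then use $o \notin L^j_t$ to conclude that the bottom of $\succ_{p^j}$ among the available outcomes is undisturbed, so exactly one outcome $o' \in L^j_t$ (hence $o' \neq o$) is pushed out. Your version is merely a bit more explicit than the paper's in identifying $o'$ as the highest-ranked member of $L^j_t$, which is fine.
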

	\begin{proof}
		According to Lemma \ref{lemma:p_2_loss_outcome}, the set $L^i_{t+1}$ contains one outcome less than the set $L^j_t$. Therefore, if $p^i$ offers $o \notin L^j_t$ and $p^j$ rejects it, there must be another outcome $o' \in O_t$ that left the set $L^j_t$.
	\end{proof}

	For ease of notation, we write $U \succ_{p} o$ for $U \subset O$ to denote that party $p$ strictly prefers all of the outcomes in the set $U$ over $o$. 
	The intuition of our proof is as follows. We show that if $p^{max}$ deviates from the strategy specified by Strategy~\ref{str:maxmin}, $p^{min}$ is able to make the negotiation result in an outcome $o$, such that $U^{max}_1 \succ_{p^{max}} o$. 	
	\begin{theorem}
		Strategy~\ref{str:maxmin} specifies a maxmin strategy, and the maxmin value of the game is the set $U^{max}_1$. 
		\label{thm:min_max}
	\end{theorem}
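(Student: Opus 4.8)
The plan is to split the statement into two parts and prove them separately. First (the \emph{guarantee}): if $p^{max}$ follows Strategy~\ref{str:maxmin} then, whatever $p^{min}$'s preference order and strategy are, the negotiation result lies in $U^{max}_1$. Second (\emph{tightness}): no strategy of $p^{max}$ can guarantee anything better, i.e., for every strategy of $p^{max}$ there is a preference order and a strategy for $p^{min}$ under which the result is no better for $p^{max}$ than the least preferred outcome of $U^{max}_1$. Together these say that $U^{max}_1$ is exactly the maxmin value of the game and that Strategy~\ref{str:maxmin} attains it.

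For the guarantee I would induct on $m=|O|$. The base case $m=1$ is immediate, since $L^{max}_1=\emptyset$ and the unique outcome forms $U^{max}_1$. For the inductive step, look at round~$1$ with $m$ outcomes and split on whose turn it is. If it is $p^{max}$'s turn, by Strategy~\ref{str:maxmin} she offers some $o\in U^{max}_1=O_1\setminus L^{max}_1$; if $p^{min}$ accepts, the result is $o\in U^{max}_1$ and we are done, and if $p^{min}$ rejects then $o\notin L^i_1$, so Lemma~\ref{lemma:lower_i_not_change} gives $L^{max}_2=L^{max}_1$, hence $U^{max}_2=U^{max}_1\setminus\{o\}\subseteq U^{max}_1$. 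If it is $p^{min}$'s turn and she offers $o$, then by Strategy~\ref{str:maxmin} $p^{max}$ accepts iff $o\in U^{max}_1$ (again immediate), and otherwise $o\in L^j_1=L^{max}_1$, so Lemma~\ref{lemma:lower_j_not_change} gives $L^{max}_2=L^{max}_1\setminus\{o\}$, hence $U^{max}_2=U^{max}_1$. In either continuing branch the play from round~$2$ on is the same protocol on $m-1$ outcomes with $p^{max}$ still following Strategy~\ref{str:maxmin}, so the induction hypothesis gives a result in $U^{max}_2\subseteq U^{max}_1$. (Lemmas~\ref{lemma:lower_i_add} and~\ref{lemma:lower_j_loss} enter here only to track how $L^{min}_2$ evolves, which is not needed for the bound, but they complete the four-way case split over offer types.)

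For tightness the cleanest route is to re-use the guarantee half with the two parties' labels exchanged. Suppose $p^{min}$ happens to hold the reverse of $p^{max}$'s preference order and also plays Strategy~\ref{str:maxmin}; by the guarantee applied to $p^{min}$, the result lies in $U^{min}_1$ regardless of what $p^{max}$ does. Now $|L^{max}_1|+|L^{min}_1|=m-1$ for either parity of $m$ (one of the two sets is $L^i_1$ and the other is $L^j_1$), so, read in $p^{max}$'s order, $U^{min}_1$ is exactly $L^{max}_1$ together with the single least preferred outcome of $U^{max}_1$. Hence the result is $\preceq_{p^{max}}$ that outcome, so no strategy of $p^{max}$ can guarantee a result strictly preferred to every element of $U^{max}_1$. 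Combined with the guarantee, this pins the maxmin value down to $U^{max}_1$ and makes Strategy~\ref{str:maxmin} a maxmin strategy. (Equivalently, as in the intuition stated above, one can argue that any deviation of $p^{max}$ from Strategy~\ref{str:maxmin} can be exploited by such a $p^{min}$ to push the result strictly below all of $U^{max}_1$.)

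I expect the main obstacle to be the induction for the guarantee. The tempting invariant ``$U^{max}_t\subseteq U^{max}_1$ for every round $t$'' is \emph{false} — an outcome of $L^{max}_1$ can migrate into $U^{max}_t$ once enough lower-ranked outcomes have been removed — so one must instead identify $L^{max}_2$ exactly via the structural lemmas and notice that, whichever of the four offer/response branches occurs, a single round already yields $U^{max}_2\subseteq U^{max}_1$; one also has to keep the $p^{max}=p^1$ versus $p^{max}=p^2$ split and the parity of $|O_t|$ straight throughout, and check that the subgame from round~$2$ is again an instance of the same protocol with $p^{max}$ following the strategy so the hypothesis applies. Once the guarantee is in hand, tightness follows almost immediately from the identity $|L^i_1|+|L^j_1|=m-1$.
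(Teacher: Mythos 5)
Your proposal is correct, and the guarantee half is essentially the paper's own argument: the same induction on $m$, using Lemma~\ref{lemma:lower_i_not_change} (offer from $U^{max}_t$ rejected, so $L^{max}_{t+1}=L^{max}_t$) and Lemma~\ref{lemma:lower_j_not_change} (offer from $L^{max}_t$ rejected, so $L^{max}_{t+1}=L^{max}_t\setminus\{o\}$) to get the one-step containment $U^{max}_{t+1}\subseteq U^{max}_t$ and then invoke the hypothesis on the subgame. Where you genuinely diverge is the tightness half. The paper keeps tightness inside the same induction as a deviation-by-deviation analysis: offering some $o$ below $U^{max}_t$ can simply be accepted, accepting such an $o$ ends the game badly, and rejecting a good offer $o\in U^{max}_t$ is punished via Lemma~\ref{lemma:lower_j_loss}, which shows some $o'\in L^{max}_t$ migrates into $U^{max}_{t+1}$ and is then a possible result worse than all of $U^{max}_t$. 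You instead exhibit a single explicit adversary -- a $p^{min}$ with the reversed preference order playing Strategy~\ref{str:maxmin} herself -- apply your guarantee to \emph{her}, and use the identity $|L^i_1|+|L^j_1|=m-1$ to conclude that, read in $p^{max}$'s order, $U^{min}_1=L^{max}_1\cup\{\text{least preferred element of }U^{max}_1\}$, so no strategy of $p^{max}$ has a worst case above that element. This buys a cleaner and arguably more rigorous upper bound (the paper's ``$o'$ is a possible result'' step leaves the adversarial continuation implicit), needs neither Lemma~\ref{lemma:lower_i_add} nor Lemma~\ref{lemma:lower_j_loss}, and is in fact the same reversed-order trick the paper itself uses later for the robust-optimization equilibrium theorem; what it gives up is the paper's subgame-by-subgame statement that every local deviation can be punished, though the root-level bound is all the theorem requires. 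One small quibble: your parenthetical claim that the invariant $U^{max}_t\subseteq U^{max}_1$ is false is misleading for play in which $p^{max}$ follows the strategy -- your own one-step containment chains into exactly that invariant along such paths -- but nothing in your argument depends on the remark.
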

	\begin{proof}
		We will prove by induction on $m$. If $m=2$ WLOG assume that $\succ_{p^{max}}= o_1 \succ o_2$.
		If $p^{max} = p^1$ then $U^{max}_1 = \{o_1,o_2\}$ and clearly one of them will be the negotiation result.
		If $p^{max} = p^2$ then $U^{max}_1 = \{o_1\}$. If $p^{min}$ offers $o_1$ in the first round, according to our strategy $p^{max}$ should accept it. If $p^{min}$ offers $o_2$ in the first round, according to our strategy $p^{max}$ should reject it, and offer $o_1$ in the next round. Since this is the last round, $o_1$ will be accepted. In any case, the negotiation result is $o_1$. On the other hand, if $p^{max}$ deviates and rejects the offer of $o_1$ or accepts the offer of $o_2$, then $o_2$ will be the result of the negotiation, but $U^{max}_1 \succ_{p^{max}} o_2$. 
		Now, assume that if there are $m-1$ outcomes in round $t+1$ our strategy specifies a maxmin strategy, and the maxmin value of the game is the set $U^{max}_{t+1}$. We show that our strategy specifies a maxmin strategy, and the maxmin value of the game is the set $U^{max}_t$ when there are $m$ outcomes in round $t$.
		 
		Assume that it is $p^{max}$'s turn to offer. Clearly, if $p^{max}$ deviates and offers an outcome $o$ such that $U^{max}_t \succ_{p^{max}} o$ then $p^{min}$ can accept it, and the negotiation results in $o$. On the other hand, if $p^{max}$  offers any $o \in U^{max}_t$ then $p^{min}$ can either accept or reject it.
		If $p^{min}$ rejects it then there are $m-1$ outcomes in the next round, and according to the induction assumption $p^{max}$ can guarantee the maxmin value of $U^{max}_{t+1}$ by following our strategy. However, according to Lemma~\ref{lemma:lower_i_not_change}, $L_{t+1}^{max} = L_{t}^{max}$ and thus $U^{max}_{t+1} \cup \{o\} = U^{max}_{t}$. Overall, the maxmin value of the game is the set $U^{max}_{t}$. 
		
		Now assume that it is $p^{min}$'s turn to offer, and $p^{min}$ offers $o \in U^{max}_t$. Clearly, if $p^{max}$ accepts then the negotiation result is from $U^{max}_t$. If $p^{max}$ deviates and rejects, then according to induction assumption $p^{max}$ can guarantee the maxmin value of $U^{max}_{t+1}$. However, according to Lemma~\ref{lemma:lower_j_loss}, $L_{t+1}^{max}= L_{t}^{max} \setminus \{o'\}$, and thus $U^{max}_{t+1} = U^{max}_{t} \setminus \{o\} \cup \{o'\}$. That is, $o'$ is a possible result of the negotiation even though $U^{max}_{t} \succ_{p^{max}} o'$.
		Finally, assume that $p^{min}$ offers $o \notin U^{max}_t$. Clearly, if $p^{max}$ deviates and accepts, then the negotiation results in $o$. On the other hand, if $p^{max}$ follows our strategy and rejects, then according to the induction assumption $p^{max}$ can guarantee the maxmin value of $U^{max}_{t+1}$. However, according to Lemma~\ref{lemma:lower_j_not_change}, $L_{t+1}^{max} = L_{t}^{max} \setminus \{o\}$, and thus $U^{max}_{t+1}=U^{max}_t$.
	\end{proof}
	We note that even though a party does not hold any information regarding the preference order of the other party, she can still guarantee that the negotiation result will be from the upper part of her preference order (i.e., $U^k_1$) by following Strategy~\ref{str:maxmin}. This is possible since both parties have some important common knowledge, which is the number of outcomes $m$, as formally captured in Lemma~\ref{lemma:not_in_lowers}.
	
	Now, what will be the negotiation result if neither party knows the preference order of the other party, but both are rational and will thus follow the maxmin strategy? Clearly, the negotiation result will be an outcome $o$ such that $o \in U^1_1 \cap U^2_1$. That is, an outcome from the set $JG_1$ as defined in Definition~\ref{defn:lowers}. We then get an interesting observation: if $I_1=\emptyset$, $JG_1 = \{0_{eq}\}$ according to Corollary~\ref{lemma:no_inter_eq}, thus the negotiation result is the same for both the case of full information and the case of no information.
	
	In addition, we note that a party $p^i$ cannot guarantee that the negotiation result will be from a subset $U \subset U^i_1$, since we proved that this is the maxmin value. However, she can heuristically offer in each round $t$ the best outcome in $U^i_t$, instead of an arbitrarily chosen $o \in U^i_t$. Since $|U^{j}_t| \geq |L^{j}_t|$, if the other party $p^j$ is also rational and plays the maxmin strategy, there are more cases where $p^j$ will accept this offer, and it is thus beneficial for $p^i$ to heuristically offer in each round $t$ the best outcome in $U^i_t$.
	
	The idea of the maxmin strategy is that a party, not knowing the preferences of the other party, makes a worst case assumption about the behavior of that party (i.e., that she does not need to be rational). This assumption may seem too restrictive, and we therefore also consider the robust-optimization equilibrium solution concept from~\cite{aghassi2006robust}, which we adapt to out setting. Intuitively, in this solution concept each party makes a worst case assumption about the preference order of the other party, but each party still assumes that the other party will play rationally and thus her aim is to maximize her utility.
	Formally, given a strategy profile $[s_1,s_2]$ and a preference order $\prec$, let $w_{s_1,\prec,s_2} = \mathcal{F}([s_1(\prec),s_2(\prec')])$, where $\prec'$ is a preference order such that for all $\prec''$, $\mathcal{F}([s_1(\prec),s_2(\prec')]) \preceq_{p^1} \mathcal{F}([s_1(\prec),s_2(\prec'')])$.  In the robust-optimization setting, a strategy for party $k \in \{1,2\}$,  $s_k$, is a best response to $s_{3-k}$ if for all $s_k'$ and for all $\prec$, $w_{s_k,\prec,s_{3-k}} \succeq_{p^k} w_{s_k',\prec,s_{3-k}}$. A strategy profile $[s_1,s_2]$ is a robust-optimization equilibrium if $s_1$ is a best response to $s_2$ and $s_2$ is a best response to $s_1$. We show that in our setting, surprisingly, every pair of maxmin strategies specifies a robust-optimization equilibrium.	
	\begin{theorem}
		If $s_1$ and $s_2$ are maxmin strategies, then $[s_1,s_2]$ is a robust-optimization equilibrium. 
	\end{theorem}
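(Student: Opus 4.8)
The plan is to verify the two best-response conditions that define a robust-optimization equilibrium. Since Theorem~\ref{thm:min_max} treats the opening party and the non-opening party symmetrically, it suffices to show that the maxmin strategy $s_1$ is a best response to the maxmin strategy $s_2$; the argument that $s_2$ is a best response to $s_1$ is the mirror image with the two parties exchanged. Fix a preference order $\prec$ for $p^1$. It determines the set $U^1_1$ (Definition~\ref{defn:upper}); write $o^\ast = \min_{\succ_{p^1}} U^1_1$ for its least preferred element. I would prove $w_{s_1,\prec,s_2}\succeq_{p^1} o^\ast$ and $w_{s_1',\prec,s_2}\preceq_{p^1} o^\ast$ for every deviation $s_1'$, which together give $w_{s_1,\prec,s_2}\succeq_{p^1} w_{s_1',\prec,s_2}$.

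For the lower bound: when $p^1$ plays the maxmin strategy (Strategy~\ref{str:maxmin}), Theorem~\ref{thm:min_max} guarantees that the negotiation result lies in $U^1_1$ no matter which preference order the adversary picks for $p^2$ and no matter how $p^2$ then plays. Every element of $U^1_1$ is weakly preferred by $p^1$ to $o^\ast$, so the worst-case result is as well, i.e.\ $w_{s_1,\prec,s_2}\succeq_{p^1} o^\ast$.

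For the upper bound, the key observation is that in $w_{s_1',\prec,s_2}$ the opponent $p^2$ still plays the maxmin strategy, so by Theorem~\ref{thm:min_max} applied to $p^2$ the result is forced into $U^2_1$ regardless of the deviation $s_1'$ — there is no need to trace the game tree under the deviation. Hence it is enough for the adversary to choose a preference order $\prec''$ for $p^2$ whose lower set is exactly $U^1_1\setminus\{o^\ast\}$. This is feasible because, from Definition~\ref{defn:lowers}, $|L^1_1|+|L^2_1| = m-1$ in both parities, so $|U^1_1|-1 = |L^2_1|$; ranking the $|U^1_1|-1$ outcomes of $U^1_1\setminus\{o^\ast\}$ below the remaining $|L^1_1|+1$ outcomes $(O\setminus U^1_1)\cup\{o^\ast\}$ gives a valid order with $L^2_1 = U^1_1\setminus\{o^\ast\}$, hence $U^2_1 = (O\setminus U^1_1)\cup\{o^\ast\}$. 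Under this $\prec''$ the result lies in $(O\setminus U^1_1)\cup\{o^\ast\}$, and every member of that set is weakly $\preceq_{p^1} o^\ast$ (the outcomes of $O\setminus U^1_1 = L^1_1$ are all $\prec_{p^1} o^\ast$). Thus $\mathcal{F}([s_1'(\prec),s_2(\prec'')])\preceq_{p^1} o^\ast$, and a fortiori $w_{s_1',\prec,s_2}\preceq_{p^1} o^\ast$.

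Chaining the two bounds yields $w_{s_1,\prec,s_2}\succeq_{p^1} o^\ast \succeq_{p^1} w_{s_1',\prec,s_2}$ for every $s_1'$ and every $\prec$, so $s_1$ is a best response to $s_2$; the identical argument with the roles of $p^1$ and $p^2$ swapped (using $|U^k_1| = |L^{3-k}_1|+1$ in both directions) shows $s_2$ is a best response to $s_1$, and therefore $[s_1,s_2]$ is a robust-optimization equilibrium. The only delicate point — and the one I would flag as the crux — is recognising that the upper-bound step is powered by Theorem~\ref{thm:min_max} invoked for the \emph{opponent}, so that the adversarial choice of $p^2$'s type confines the outcome below $o^\ast$ for \emph{any} response of $p^1$; the accompanying size identity $|U^k_1| = |L^{3-k}_1|+1$ is then a one-line check.
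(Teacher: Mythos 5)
Your proposal is correct and follows essentially the same route as the paper's proof: establish that the worst case under the maxmin strategy is the least preferred outcome of $U^k_1$, then rule out profitable deviations by invoking Theorem~\ref{thm:min_max} for the \emph{opponent} (who still plays maxmin) against an adversarial preference order whose lower set is $U^k_1$ minus that outcome. The only cosmetic difference is that the paper takes this adversarial order to be the exact reversal $\prec_{op}$ of $p^k$'s order, which is a particular instance of the order you construct (both rely on the same size identity $|L^1_1|+|L^2_1|=m-1$).
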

	\begin{proof}
		Given a preference order, $\prec$, let $\prec_{op}$ be the opposite preference order, i.e., if $\prec = o_1 \prec o_2 \prec ... \prec o_m$ then $\prec_{op} = o_1 \succ o_2 \succ ... \succ o_m$. 
		According to Theorem~\ref{thm:min_max}, if $s_k$ is a maxmin strategy then the negotiation result is $o \in U^k_1$. That is, the worst negotiation result for $p^k$ is the least preferred outcome in $U^k_1$, denoted by $o_{wo}$. Since the other party $p^{3-k}$ is also using a maxmin strategy, $o \in U^k_1 \cap U^{3-k}_1$. For every preference order, $\prec$, if the preference order of $p^{3-k} = \prec_{op}$, $U^k_1 \cap U^{3-k}_1 = \{o_{wo}\}$. That is, $w_{s_1,\prec,s_{3-k}}=o_{wo}$. Assume by contradiction that there is another strategy $s_k'$ and a preference order $\prec$, such that $w_{s_k',\prec,s_{3-k}} \succ_{p^k} o_{wo}$. However, if the preference order of $p^{3-k} = \prec_{op}$, $w_{s_k',\prec,s_{3-k}} \notin U^{3-k}_1$, in contradiction to Theorem~\ref{thm:min_max}.
	\end{proof}
	Finally, consider an asymmetric information setting, where there exists one party that has full information about the other party's preference order, while the other party does not have this information.
	Let $p^{info}$ be the party that has the full information, and $p^{null}$ be the other party.
	$p^{null}$ has no information and she will thus act according to the maxmin strategy (Strategy~\ref{str:maxmin}).
	$p^{info}$ would like to take advantage of her knowledge, so the negotiation result will be better for her. However, according to Theorem~\ref{thm:min_max}, the maxmin value of the game is $U^{null}_1$. Therefore, the best strategy for $p^{info}$ is as follows. If $p^{info}$ starts the negotiation, she should offer the best outcome from $U^{null}_1$ according to her preferences, and $p^{null}$ will accept it. If $p^{info}$ starts the negotiation, she will offer an outcome from $U^{null}_1$. If this is the best outcome according to $p^{info}$'s preferences, she should accept it. Otherwise, in the second round $p^{info}$ should offer the best outcome from $U^{null}_1$ according to her preferences, and $p^{null}$ will accept it.
			
	\section{Conclusion}
	We investigated the VAOV negotiation protocol, which is suitable for ordinal preferences over a finite set of outcomes. We introduced strategies that specify a SPE, and improved upon previous results by providing a linear time algorithm that computes a SPE strategy. We provided substantial analysis of our strategies, which showed the equivalence of the SPE result of the protocol in a non-cooperative setting, to the result of the $RC$ rule in a cooperative setting. 
	Finally, we analyzed the no information setting. 
	We believe that our approach is especially suitable for non-cooperative, multi-agent systems, since we provide easy to implement strategies that can be computed only once if both agents follow the SPE strategy on the equilibrium path. Moreover, there is no need for a central authority to guarantee that the negotiation result will be Pareto optimal, if both agents are rational and follow the SPE strategy. 
	For future work, we would like to extend the protocol to a multi-party setting and analyze the resulting SPE. In addition, it is important to find additional implementation of other bargaining rules by negotiation protocols, similar to the implementation that we showed for the $RC$ rule by the SPE of the VAOV protocol.
	%

	\section*{Acknowledgments}
	This work was supported by the Israel Science Foundation, Grant No. 1488/14.

	\bibliographystyle{abbrv}
	\bibliography{sample}

\begin{thebibliography}{10}

\bibitem{aghassi2006robust}
M.~Aghassi and D.~Bertsimas.
\newblock Robust game theory.
\newblock {\em Mathematical Programming}, 107(1-2):231--273, 2006.

\bibitem{ali2012ordinal}
S.~Ali and S.~Ronaldson.
\newblock Ordinal preference elicitation methods in health economics and health
  services research: using discrete choice experiments and ranking methods.
\newblock {\em British medical bulletin}, 103(1):21--44, 2012.

\bibitem{anbarci93noncooperative}
N.~Anbarci.
\newblock Noncooperative foundations of the area monotonic solution.
\newblock {\em The Quarterly Journal of Economics}, 108(1):245--258, 1993.

\bibitem{anbarci2006finite}
N.~Anbarci.
\newblock Finite alternating-move arbitration schemes and the equal area
  solution.
\newblock {\em Theory and decision}, 61(1):21--50, 2006.

\bibitem{boutilier2015optimal}
C.~Boutilier, I.~Caragiannis, S.~Haber, T.~Lu, A.~D. Procaccia, and O.~Sheffet.
\newblock Optimal social choice functions: A utilitarian view.
\newblock {\em Artificial Intelligence}, 227:190--213, 2015.

\bibitem{brams2001fallback}
S.~J. Brams and D.~M. Kilgour.
\newblock Fallback bargaining.
\newblock {\em Group Decision and Negotiation}, 10(4):287--316, 2001.

\bibitem{handbook}
F.~Brandt, V.~Conitzer, U.~Endriss, A.~D. Procaccia, and J.~Lang.
\newblock {\em Handbook of computational social choice}.
\newblock Cambridge University Press, 2016.

\bibitem{conley2012ordinal}
J.~P. Conley and S.~Wilkie.
\newblock The ordinal egalitarian bargaining solution for finite choice sets.
\newblock {\em Social Choice and Welfare}, 38(1):23--42, 2012.

\bibitem{de2014selection}
G.~De~Clippel, K.~Eliaz, and B.~Knight.
\newblock On the selection of arbitrators.
\newblock {\em American Economic Review}, 104(11):3434--58, 2014.

\bibitem{fatima2014principles}
S.~Fatima, S.~Kraus, and M.~Wooldridge.
\newblock {\em Principles of automated negotiation}.
\newblock Cambridge University Press, 2014.

\bibitem{hurwicz1999designing}
L.~Hurwicz and M.~R. Sertel.
\newblock Designing mechanisms, in particular for electoral systems: the
  majoritarian compromise.
\newblock In {\em Contemporary Economic Issues}, pages 69--88. Springer, 1999.

\bibitem{Bargainingoveraf}
{\"O}.~K{\i}br{\i}s and M.~R. Sertel.
\newblock Bargaining over a finite set of alternatives.
\newblock {\em Social Choice and Welfare}, 28:421--437, 2007.

\bibitem{kraus2001strategic}
S.~Kraus.
\newblock {\em Strategic negotiation in multiagent environments}.
\newblock MIT press, 2001.

\bibitem{mariotti1998nash}
M.~Mariotti.
\newblock Nash bargaining theory when the number of alternatives can be finite.
\newblock {\em Social choice and welfare}, 15(3):413--421, 1998.

\bibitem{nagahisa2002axiomatization}
R.-i. Nagahisa and M.~Tanaka.
\newblock An axiomatization of the kalai-smorodinsky solution when the feasible
  sets can be finite.
\newblock {\em Social Choice and Welfare}, 19(4):751--761, 2002.

\bibitem{nash1950bargaining}
J.~F. Nash~Jr.
\newblock The bargaining problem.
\newblock {\em Econometrica}, 18(2):155--162, 1950.

\bibitem{nunez2015bargaining}
M.~Nunez and J.-F. Laslier.
\newblock Bargaining through approval.
\newblock {\em Journal of Mathematical Economics}, 60:63--73, 2015.

\bibitem{osborne90a}
M.~J. Osborne and A.~Rubinstein.
\newblock {\em Bargaining and Markets}.
\newblock Academic Press, 1990.

\bibitem{rubinstein1982perfect}
A.~Rubinstein.
\newblock Perfect equilibrium in a bargaining model.
\newblock {\em Econometrica: Journal of the Econometric Society}, pages
  97--109, 1982.

\bibitem{shapley1969utility}
L.~S. Shapley.
\newblock {\em Utility comparison and the theory of games}.
\newblock Cambridge: Cambridge University Press. Originally published in La
  Decision, 1969.

\bibitem{sprumont1993intermediate}
Y.~Sprumont.
\newblock Intermediate preferences and rawlsian arbitration rules.
\newblock {\em Social Choice and Welfare}, 10(1):1--15, 1993.

\bibitem{thomson1994cooperative}
W.~Thomson.
\newblock Cooperative models of bargaining.
\newblock {\em Handbook of game theory with economic applications},
  2:1237--1284, 1994.

\bibitem{zhang2008ordinal}
D.~Zhang and Y.~Zhang.
\newblock An ordinal bargaining solution with fixed-point property.
\newblock {\em Journal of Artificial Intelligence Research}, 33:433--464, 2008.

\bibitem{zlotkin1996mechanism}
G.~Zlotkin and J.~S. Rosenschein.
\newblock Mechanism design for automated negotiation, and its application to
  task oriented domains.
\newblock {\em Artificial Intelligence}, 86(2):195--244, 1996.

\end{thebibliography}
		
	\section{Appendix}
Even though the uniqueness of the SPE result was proven by Anbarci~\cite{anbarci93noncooperative}, we provide a direct and simpler proof.
\begin{theorem}
	The SPE result is unique.
\end{theorem}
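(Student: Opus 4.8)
The plan is to prove the statement by backward induction on the number of available outcomes, showing that in \emph{every} subgame of the VAOV game the SPE leads to a single outcome, regardless of how many (possibly distinct) SPE strategy profiles induce it. The key fact to keep in mind throughout is that although the set of SPE strategy profiles need not be a singleton, the resulting \emph{outcome} is forced, because both parties have strict preferences over the finite set $O$, so no two distinct outcomes are ever tied for either party. This yields a direct argument, independent of the characterization via Strategies~\ref{str:offer} and~\ref{str:resp}.

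For the base case, a subgame in which only one outcome $o$ remains has $o$ as its unique result by the rules of the protocol. For the inductive step, consider a subgame starting at round $t$ with $|O_t| = m_t \ge 2$ available outcomes and with $p^i$ to move, and fix an arbitrary SPE. By the induction hypothesis, every subgame reached after one more offer is made and rejected --- such a subgame ranges over $m_t - 1$ outcomes --- has a unique SPE result, and moreover the SPE restricted to that subgame attains it. Thus for each candidate offer $o \in O_t$, let $r(o)$ be the unique SPE result of the continuation subgame over $O_t \setminus \{o\}$ reached if $p^j$ rejects $o$; note $r(o) \neq o$ since $o \notin O_t \setminus \{o\}$. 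Because $\succ_{p^j}$ is strict, exactly one of $o \succ_{p^j} r(o)$ or $r(o) \succ_{p^j} o$ holds, so $p^j$'s response to $o$ in any SPE is uniquely determined: accept iff $o \succ_{p^j} r(o)$. Hence the outcome resulting from $p^i$ offering $o$ is the well-defined value $g(o) = o$ if $o \succ_{p^j} r(o)$ and $g(o) = r(o)$ otherwise.

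It remains to argue that $p^i$'s choice pins down a unique result. In any SPE, $p^i$ must offer some $o^{\star}$ with $g(o^{\star})$ maximal in $\succ_{p^i}$ among $\{g(o) : o \in O_t\}$: otherwise $p^i$ could profitably deviate to an offer whose guaranteed result is $\succ_{p^i}$-better, which by strictness of $\succ_{p^i}$ gives strictly higher utility, contradicting the SPE property. Even if several offers attain this maximum, they all yield the same outcome, namely the $\succ_{p^i}$-maximal element of $\{g(o) : o \in O_t\}$, which is unique because $\succ_{p^i}$ is a strict order on a finite set. Therefore the SPE result of the subgame at round $t$ is unique, completing the induction; taking $t = 1$ gives the theorem.

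The main (and essentially only) subtlety is the one flagged at the outset: separating uniqueness of the \emph{result} from the non-uniqueness of equilibrium \emph{strategies}. A party may well be indifferent between two different \emph{offers} --- for instance, offering an outcome that is accepted versus offering one that is rejected but leads back to the same outcome --- yet can never be indifferent between two distinct resulting \emph{outcomes}. This is exactly where the no-ties assumption on $\succ_{p^1}$ and $\succ_{p^2}$ enters, both for the responder's forced choice (via $r(o) \neq o$) and for the proposer's strictly optimal choice of target outcome; without it the result need not be unique.
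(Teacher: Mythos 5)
Your proof is correct and takes essentially the same route as the paper: backward induction on the number of available outcomes, with strictness of $\succ_{p^j}$ pinning down the responder's accept/reject decision given the unique continuation result, and strictness of $\succ_{p^i}$ forcing the proposer's choice of the best achievable outcome (even if several offers attain it). Your write-up merely makes explicit the responder's forced behavior and the strategy-versus-outcome distinction, which the paper's shorter argument leaves implicit.
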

\begin{proof}
	We prove by induction on $m$. If $m=2$, then no matter what $p^1$ offers, the negotiation results with the most preferred outcome of $p^2$, and thus the SPE is unique.
	Now, assume that if there are $m-1$ outcomes in round $t+1$, the SPE is unique. We show that the SPE is unique when there are $m$ outcomes in round $t$.
	$p^i$ is able to offer an outcome $o \in O_t$. For any such $o$, $p^j$ either accepts $o$ or rejects it and the game moves to round $t+1$ with $m-1$ outcomes. According to the induction assumption, the SPE is unique in each sub-tree of the game where there are $m-1$ outcomes. Since $p^i$ has strict preferences, in a SPE she will choose either an outcome that $p^j$ will accept or a sub-tree of the game, that results with the best outcome according to $p^i$'s preferences. That is, in all of the offers of $p^i$ in round $t$ that are in SPE, the SPE result is the same.
\end{proof}
\end{document}